\documentclass[3p,times]{elsarticle}

\usepackage[T1]{fontenc}
\usepackage{microtype}

\usepackage{amsmath,amssymb,amsfonts,amsthm}
\usepackage{nicefrac}
\usepackage{bm}

\usepackage{graphicx}
\usepackage{booktabs}
\usepackage{multirow}
\usepackage{array}
\usepackage{colortbl}
\usepackage{wrapfig}
\usepackage{graphbox}
\usepackage{subcaption}
\usepackage[textfont=it,labelfont=bf]{caption}

\usepackage{enumitem}
\usepackage{url}

\usepackage{algorithm}
\usepackage{algpseudocode}

\usepackage{mdframed}
\usepackage{xcolor}
\usepackage{shadethm}
\usepackage{thmtools}

\usepackage{pgfplots}
\pgfplotsset{compat=1.17}

\usepackage{hyperref}
\hypersetup{
    breaklinks=true,
    colorlinks=true,
    linkcolor=blue!80!black,
    citecolor=blue!80!black,
    urlcolor=blue!80!black,
    plainpages=true
}

\newcommand{\R}{\mathbb{R}}
\newcommand{\distH}{d_H}
\newcommand{\op}{\mathrm{op}}
\newcommand{\dist}{\operatorname{dist}}
\DeclareMathOperator{\Lip}{Lip}

\newcommand{\secref}[1]{\hyperref[#1]{Section~\ref*{#1}}}
\newcommand{\figref}[1]{\hyperref[#1]{Figure~\ref*{#1}}}
\newcommand{\tabref}[1]{\hyperref[#1]{Table~\ref*{#1}}}
\newcommand{\eqnref}[1]{\hyperref[#1]{(Equation~\ref*{#1})}}
\newcommand{\algoref}[1]{\hyperref[#1]{Algorithm~\ref*{#1}}}
\newcommand{\thmref}[1]{\hyperref[#1]{Theorem~\ref*{#1}}}
\newcommand{\lemref}[1]{\hyperref[#1]{Lemma~\ref*{#1}}}
\newcommand{\remref}[1]{\hyperref[#1]{Remark~\ref*{#1}}}
\newcommand{\assmref}[1]{\hyperref[#1]{Assumption~\ref*{#1}}}

\declaretheoremstyle[
  spaceabove=6pt,
  spacebelow=6pt,
  headfont=\bfseries,
  bodyfont=\itshape,
  postheadspace=0.5em
]{thmshaded}

\declaretheorem[
  style=thmshaded,
  shaded={rulecolor=black, rulewidth=0.8pt, bgcolor=blue!5!white},
  name=Theorem,
  numberwithin=section
]{theorem}

\declaretheorem[
  style=thmshaded,
  shaded={rulecolor=black, rulewidth=0.8pt, bgcolor=blue!3!white},
  name=Lemma,
  sibling=theorem
]{lemma}

\theoremstyle{definition}

\newtheorem{assumption}[theorem]{Assumption}

\theoremstyle{remark}
\newtheorem{remark}[theorem]{Remark}
\newtheorem{corollary}[theorem]{Corollary}

\usepackage{pifont}

\title{Neural Geometry for PDEs: Regularity, Stability, and Convergence Guarantees}
\author[ISU]{Samundra Karki}
\author[ISU]{\\Adarsh Krishnamurthy}
\author[ISU]{Baskar Ganapathysubramanian\texorpdfstring{\corref{cor1}}{}}
\affiliation[ISU]{organization={Iowa State University}, 
            city={Ames},
            state={Iowa},
            country={USA}}
\cortext[cor1]{Corresponding Authors}
\date{}

\begin{document}
\begin{frontmatter}

% \maketitle

\begin{abstract}
Implicit Neural Representations (INRs) have emerged as a powerful tool for geometric representation, yet their suitability for physics-based simulation remains underexplored. While metrics like Hausdorff distance quantify surface reconstruction quality, they fail to capture the geometric regularity required for provable numerical performance. This work establishes a unified theoretical framework connecting INR training errors to Partial Differential Equation (PDE) (specifically, linear elliptic equation) solution accuracy. We define the minimal geometric regularity required for INRs to support well-posed boundary value problems and derive \emph{a priori} error estimates linking the neural network's function approximation error to the finite element discretization error. Our analysis reveals that to match the convergence rate of linear finite elements, the INR training loss must scale quadratically relative to the mesh size.
\end{abstract}

\end{frontmatter}

\section{Introduction}
\label{sec:introduction}

Accurately representing complex geometry is a foundational requirement in computational science and engineering, underpinning boundary-value problems arising in solid mechanics, fluid flow, transport, and multiphysics coupling. In most classical simulation workflows, geometry is supplied explicitly,for instance, as a CAD boundary representation or a polygonal surface, and subsequently converted into a discretization-ready form, such as a body-fitted mesh. While this pipeline is mature, generating high-quality meshes for complex shapes can be labor-intensive, computationally expensive, and prone to failure or loss of fidelity when geometric complexity increases \citep{nguyen2015isogeometric,liu2022eighty,peskin1972flow,mchenry2008overview, CHIBA1998145}.

Implicit neural representations (INRs) have recently emerged as an attractive alternative for representing geometry in a compact, continuous, and differentiable manner \citep{park2019deepsdf,chibane2020implicit,gropp2020implicit,sitzmann2020implicit}. An INR encodes a shape implicitly as the zero level set of a neural field $\phi_\theta:\mathbb{R}^n \to \mathbb{R}$, which is commonly trained to approximate a signed distance function (SDF). Compared to explicit surface representations, INRs offer several advantages: (i) they provide a resolution-independent description of geometry, (ii) they can be learned directly from data such as point clouds or images \citep{wang2021neus,gropp2020implicit}, and (iii) they can be queried at arbitrary spatial locations without requiring explicit surface extraction.

Implicit fields have long been used in numerical analysis pipelines, most notably for geometry processing and mesh generation. For instance, \citet{persson2005mesh} demonstrated boundary-fitted mesh construction using level-set representations, where the robustness of the procedure depends critically on the well-posedness of geometric quantities derived from the implicit field, such as normals and gradients. More recently, INRs have been integrated directly into embedded and unfitted discretizations. In particular, \citet{karki2025direct,karki2025mechanics} propose direct simulation workflows on octree-based computational grids for prototypical PDEs, including
linear elasticity and the incompressible Navier--Stokes equations, where boundary conditions are imposed weakly through the \emph{Shifted Boundary Method (SBM)} \citep{main2018shifted1,main2018shifted2}. A key ingredient in such direct pipelines is ensuring that the learned implicit field remains geometrically well-behaved in a narrow band around the interface. To this end, \citet{karki2025direct} employs training strategies that enforce an approximately Eikonal structure ($\|\nabla \phi_\theta\|\approx 1$) in a tubular neighborhood, together with additional normal-based consistency losses, in order to maintain stable boundary projections and well-defined normal directions.

These developments highlight that, beyond surface reconstruction, \emph{simulation on neural geometry fundamentally depends on differential regularity properties of the implicit field}. Nevertheless, a rigorous understanding of this requirement is still missing: \emph{what geometric accuracy and regularity are required for an INR to support stable and convergent numerical simulation?} The majority of INR literature evaluates geometric quality using reconstruction metrics such as Chamfer or Hausdorff distance between an original surface and a reconstructed surface. While such measures quantify surface proximity, they do not, by themselves, control the stability constants of boundary value problems or guarantee the well-posedness of boundary operators used in embedded methods. In particular, a small reconstruction error does not preclude pathological behavior near the interface, such as vanishing gradients, irregular normals, or unbounded curvature, all of which can destabilize boundary condition imposition and degrade the accuracy of variational discretizations.

This reveals a gap between \emph{reconstruction fidelity} and \emph{simulation readiness}: there is currently no unified framework that (i) specifies minimal regularity conditions to solve PDE on INR-defined domains, (ii) quantifies how neural approximation error in the geometry propagates into PDE solutions, and (iii) connects training tolerance requirements to discretization accuracy guarantees. The objective of this work is to bridge this gap by developing error analysis that links INR approximation error to PDE solution error for elliptic boundary value problems, thereby clarifying when INRs can serve as reliable geometric proxies for scientific computing. Our analysis focuses on the Poisson-type diffusion model:
\[
-\nabla\cdot(\kappa \nabla u) = f \quad \text{in } \Omega, 
\qquad
u=0 \quad \text{on } \partial\Omega,
\]
and its INR-induced counterpart on $\Omega^p$:
\[
-\nabla\cdot(\kappa \nabla u^p) = f \quad \text{in } \Omega^p, 
\qquad
u^p=0 \quad \text{on } \partial\Omega^p,
\]
where $\kappa$ is uniformly elliptic and $f$ is a given forcing term. Our main contributions are:

\begin{itemize}[leftmargin=2.2em]
\item \textbf{Minimal geometric regularity requirements.}
We identify sufficient geometric conditions on the learned level-set function $\psi$ that ensure $\Gamma^p$ admits a well-defined tubular neighborhood and a unique normal projection map. Concretely, we require $\psi\in C^{1,1}$ in a neighborhood of the interface, a uniform non-degeneracy bound $|\nabla\psi|\ge c_0>0$, and bounded curvature through $\|\nabla^2\psi\|_{\mathrm{op}}\le C_\psi$.

\item \textbf{Training error $\Rightarrow$ Hausdorff geometry bound.}
Let $\phi$ denote a reference (ground-truth) level-set function defining the true boundary $\Gamma=\{\phi=0\}$, that encloses a true domain $\Omega=\{\phi<0\}$. We quantify INR approximation accuracy by the uniform level-set mismatch in a tubular neighborhood $\mathcal{T}_{h} := \{x: |\phi(x)|<h_{tube}\}$:
\[
\varepsilon_\infty := \|\psi-\phi\|_{L^\infty(\mathcal{T}_h)}.
\]
Under gradient non-degeneracy, this implies a Hausdorff perturbation bound of the form
\[
d_H(\Gamma,\Gamma^p)\ \lesssim\ \frac{\varepsilon_\infty}{c_0},
\]
making explicit the amplification mechanism caused by vanishing gradients.

\item \textbf{Main a priori total error estimate (INR + FEM).}
Let $u_h^p$ denote the finite element solution of polynomial degree $k$ on the perturbed domain $\Omega^p$ using mesh size $h_{\mathrm{mesh}}$. Our main result yields the total error decomposition
\begin{equation}
\label{eq:intro_total_error}
\|(u-u_h^p)\|_{L^2(D)}
\ \le\
C_{\mathrm{geom}}\left(\frac{\varepsilon_\infty}{c_0}\right)
\ +\
C_{\mathrm{FEM}}\, h_{\mathrm{mesh}}^{k+1}\, |u^p|_{H^{k+1}(\Omega^p)},
\end{equation}
where $D$ is a fixed background domain containing $\Omega\cup\Omega^p$ and functions are compared via zero extension. A direct consequence is a practical training guideline: to preserve the $\mathcal{O}(h^2_{\mathrm{mesh}})$ FEM convergence rate, the INR must satisfy $\varepsilon_\infty=\mathcal{O}(h_{\mathrm{mesh}}^{2})$ (for linear basis functions).
\end{itemize}

\paragraph{Outline}
The remainder of the paper is organized as follows. In \secref{sec:preliminaries} we introduce notation, geometric distances, and the functional analytic setting. \secref{sec:geometric_regularity} states the minimal geometric regularity assumptions required for well-posed projection and curvature control of INR-defined boundaries. \secref{sec:perturbation} derives a Hausdorff-type bound linking uniform INR function error $\varepsilon_\infty$ to geometric boundary perturbations. \secref{sec:stability} combines sharp domain perturbation estimates for elliptic problems with standard finite element approximation theory to prove the total a priori bound \eqnref{eq:intro_total_error}, yielding explicit requirements on INR training accuracy to avoid degrading simulation convergence.

\section{Preliminaries and Notation}
\label{sec:preliminaries}

We summarize the notation and functional-analytic framework used throughout the paper. This section is primarily added for readers coming from different backgrounds. 

\paragraph{Geometric notation}
Let $\mathbb{R}^n$ denote the $n$-dimensional Euclidean space. Vectors are denoted by bold symbols $\mathbf{x}\in\mathbb{R}^n$. Let $\Omega\subset\mathbb{R}^n$ be an open, bounded domain with boundary $\Gamma := \partial\Omega$. The neural (INR-induced) approximation of $\Omega$ is denoted by $\Omega^p$, with boundary $\Gamma^p$.

For $\mathbf{x}\in\mathbb{R}^n$ and $r>0$, $B_r(\mathbf{x}) := \{\mathbf{y}\in\mathbb{R}^n : |\mathbf{x}-\mathbf{y}|<r\}$ denotes the open ball. The unit sphere is $\mathbb{S}^{n-1} := \{\mathbf{x}\in\mathbb{R}^n : |\mathbf{x}|=1\}$. To quantify geometric discrepancies between domains, we use the Hausdorff distance
\begin{equation}
d_H(\Omega,\Omega^p) := 
\max\Big\{
\sup_{\mathbf{x}\in\Omega}\mathrm{dist}(\mathbf{x},\Omega^p),
\sup_{\mathbf{y}\in\Omega^p}\mathrm{dist}(\mathbf{y},\Omega)
\Big\},
\end{equation}
where $\mathrm{dist}(\mathbf{x},S) := \inf_{\mathbf{y}\in S}|\mathbf{x}-\mathbf{y}|$.

\paragraph{Regularity assumptions}
We use standard H\"older spaces $C^{k,\alpha}$. A function is in $C^{0,1}$ if it is Lipschitz continuous, and in $C^{1,1}$ if its gradient is Lipschitz continuous. Throughout, we assume the level-set functions $\phi$ and $\psi$ satisfy the tubular regularity conditions of \assmref{ass:geom}, which imply that the interfaces $\Gamma=\{\phi=0\}$ and $\Gamma^p=\{\psi=0\}$ are $C^{1,1}$ hypersurfaces. For the Hessian $\nabla^2 \phi$, we denote by $\|\nabla^2 \phi\|_{\mathrm{op}}$ its operator (spectral) norm.

\paragraph{Function spaces}
Let $L^2(\Omega)$ denote the space of square-integrable functions on $\Omega$. We use standard Sobolev spaces $H^k(\Omega)$, with $H^1_0(\Omega)$ denoting functions in $H^1(\Omega)$ that vanish on $\Gamma$ (in the trace sense), and $H^{-1}(\Omega)$ the dual
space of $H^1_0(\Omega)$. In the analysis of elliptic problems on Lipschitz domains, we employ the Besov space $B^{3/2}_{2,\infty}$, characterized (after zero extension to a background domain $D$) by
\begin{equation}
\|u\|_{B^{3/2}_{2,\infty}(D)}
:=
\|u\|_{H^1(D)}
+
\sup_{0<|h|\le1}
\frac{\|\nabla u(\cdot+h)-\nabla u(\cdot)\|_{L^2(D)}}{|h|^{1/2}}.
\end{equation}
This is the sharp regularity scale for elliptic problems posed on Lipschitz domains and plays a central role in the stability estimates of ~\secref{sec:stability}.

\begin{remark}[Zero extension]
Functions defined on $\Omega$ or $\Omega^p$ are extended by zero to a fixed background domain $D\supset\Omega\cup\Omega^p$. This allows solutions on different domains to be compared in a common space $H^1_0(D)$.
\end{remark}

\section{Geometric Regularity of Implicit Neural Representations}
\label{sec:geometric_regularity}
\begin{figure}
    \centering
    \includegraphics[width=0.8\linewidth]{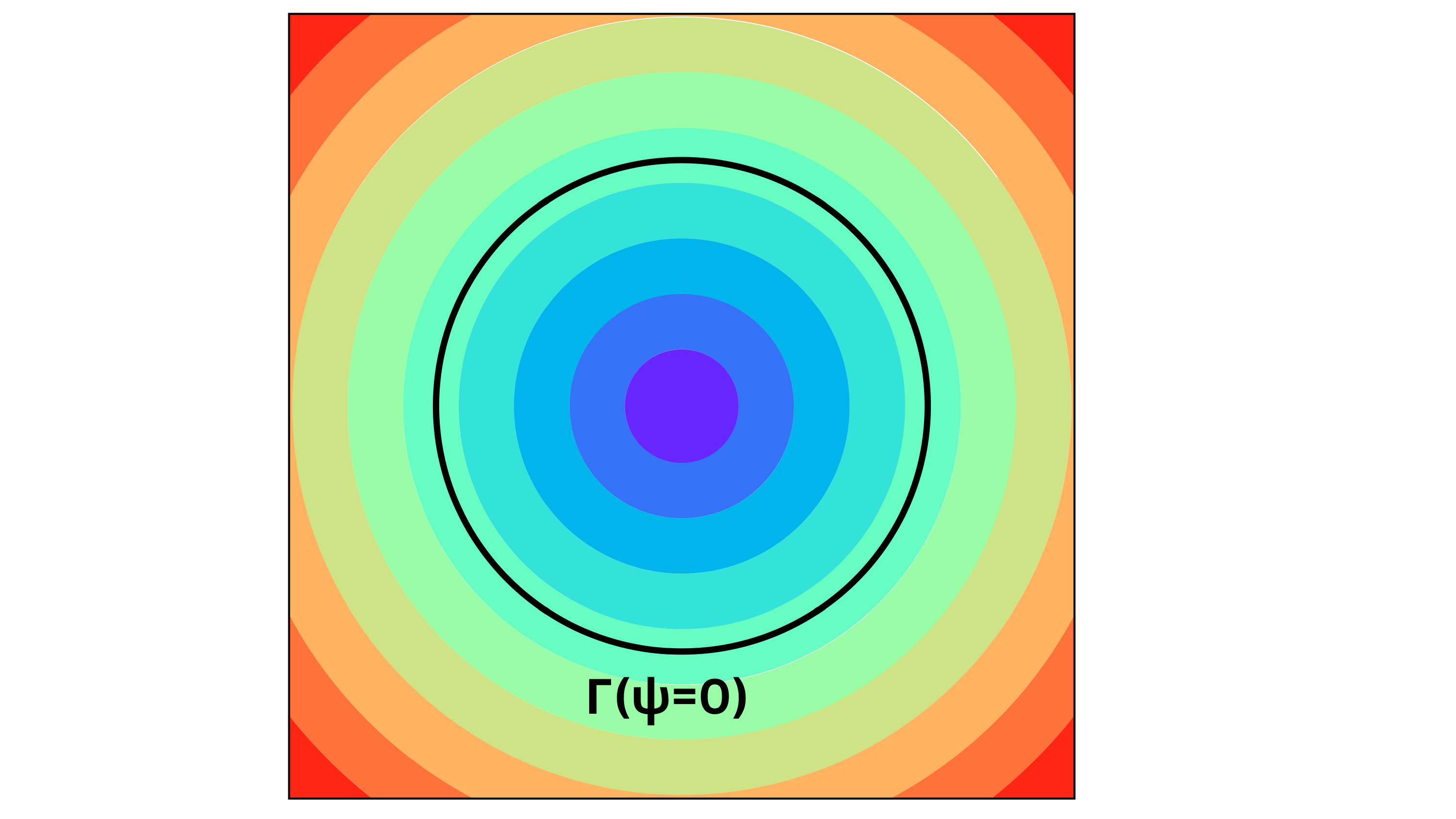}
    \caption{\textbf{Level-Set around true geometry $\Gamma.$} Implicit field ($\psi$) around a boundary ($\Gamma$) such that $\psi=0$ at $\Gamma$ }
    \label{fig:level-set-bdry}
\end{figure}
Let $\psi:\R^n \to \R$ be a scalar field implicitly defining a physical domain $\Omega$ and its interface $\Gamma (\psi =0 )$ as shown in \figref{fig:level-set-bdry}:
\[
\Omega = \{\mathbf{x}\in\R^n:\psi(\mathbf{x})<0\}, 
\qquad 
\Gamma = \{\mathbf{x}\in\R^n:\psi(\mathbf{x})=0\}.
\]
% For the level-set $\psi$ to serve as a valid representation for a PDE problem involving Neumann or Dirichlet conditions, it must be a smooth hypersurface with a well-defined normal vector $\mathbf{n} = \nabla \psi / \|\nabla \psi\|$ at $\Gamma$. We formally define the necessary conditions required for $\psi$.
For the level-set field $\psi$ to serve as a valid geometric representation for PDE problems with Dirichlet or Neumann boundary conditions imposed on $\Gamma$, it must satisfy certain regularity and non-degeneracy properties in a neighborhood of the boundary. We now formally state the assumptions required of $\psi$.
\begin{assumption}[Geometric Regularity]\label{ass:geom}
Let $\mathcal{T}_h = \{\mathbf{x} \in \R^n : |\psi(\mathbf{x})|<h\}$ be a tubular neighborhood of width $h>0$ around $\Gamma$ (which is zero-level set of a scalar-field  $\psi$). We assume $\psi \in C^{1,1}(\mathcal{T}_h)$, meaning $\psi$ is continuously differentiable with a Lipschitz continuous gradient. Furthermore, there exist constants $c_0>0$ and $C_\psi<\infty$ such that for all $\mathbf{x} \in \mathcal{T}_h$:
\begin{equation}\label{eq:geom_assump}
\|\nabla \psi(\mathbf{x})\| \ge c_0,
\qquad 
\|\nabla^2 \psi(\mathbf{x})\|_{\op} \le C_\psi,
\end{equation}
where $\|\cdot\|_{\op}$ denotes the spectral norm of the Hessian.
\end{assumption}

These conditions imply that the unit normal field $\mathbf{n}$ is Lipschitz continuous and the principal curvatures of $\Gamma$ are bounded by $C_\psi/c_0$.

\begin{remark}[Valid Neural Architectures]
The choice of activation function determines whether an INR satisfies \assmref{ass:geom}:
\begin{enumerate}
    \item \textbf{Smooth Activations ($C^\infty$):} Architectures using \texttt{tanh}, \texttt{sine} (SIREN), \texttt{swish}, or \texttt{GELU} yield $\psi \in C^\infty$, satisfying the assumption locally provided gradients do not vanish.
    \item \textbf{Higher-Order Rectifiers ($C^{1,1}$):} The Squared-ReLU, $\sigma(x) = (\max\{0, x\})^2$, yields globally $C^{1,1}$ functions with bounded Hessians, making them geometrically safe for PDE solvers.
    \item \textbf{Standard ReLU ($C^{0,1}$):} Standard \texttt{ReLU} networks produce piecewise linear boundaries with sharp creases. They fail \assmref{ass:geom} as curvature is distributional (Dirac delta) and normals are discontinuous, leading to ill-posed boundary projections.
\end{enumerate}
\end{remark}

\subsection{Well-Posedness of the Boundary Projection}
\label{sec:projection}

Fictional-domain methods (e.g., Shifted Boundary Method) require mapping points near the boundary back to $\Gamma$ \citep{main2018shifted1,main2018shifted2,hsu2016direct,JAISWAL2024117426}. We show that \assmref{ass:geom} guarantees this mapping is unique, even if $\psi$ is not a perfect Signed Distance Function (SDF).

\begin{lemma}[Local existence and uniqueness of the normal projection]
\label{lem:projection}
Let $\psi \in C^{1,1}(\mathcal{T}_h)$ satisfy \assmref{ass:geom} whose zero-level set represents boundary of domain $\Omega$. Then, there exists a constant $h_{\mathrm{crit}} > 0$, depending only on $c_0$ and $C_\psi$, such that for every $h \in (0,h_{\mathrm{crit}})$ the following holds:

For every $\mathbf{x}\in \mathcal{T}_h$, there exist unique $\mathbf{x}^* \in \Gamma$
and unique $d \in (-h,h)$ such that
\begin{equation}
\label{eq:projection_relation}
\mathbf{x} = \mathbf{x}^* + d\,\mathbf{n}(\mathbf{x}^*),
\qquad
\mathbf{n}(\mathbf{x}^*) = \frac{\nabla \psi(\mathbf{x}^*)}{\|\nabla \psi(\mathbf{x}^*)\|}.
\end{equation}
% Moreover, the projection map $\Pi_\Gamma:\mathcal{T}_h \to \Gamma$, $\Pi_\Gamma(\mathbf{x})=\mathbf{x}^*$, is Lipschitz on $\mathcal{T}_h$.
\end{lemma}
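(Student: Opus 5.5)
We will prove the lemma by building the normal-coordinate map $F:\Gamma\times(-h,h)\to\R^n$, $F(\mathbf{y},d)=\mathbf{y}+d\,\mathbf{n}(\mathbf{y})$, and showing two things. First, $F$ maps onto $\mathcal{T}_h$ (existence). Second, $F$ is injective when $h$ is small relative to $c_0$ and $C_\psi$ (uniqueness). Throughout we use \assmref{ass:geom}: $\|\nabla\psi\|\ge c_0$ and $\|\nabla^2\psi\|_{\op}\le C_\psi$ on $\mathcal{T}_h$.

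\textbf{Step 1: the normal is Lipschitz.} For $\mathbf{y}_1,\mathbf{y}_2\in\Gamma$, write $\mathbf{n}=\nabla\psi/\|\nabla\psi\|$. The elementary estimate $\|a/|a|-b/|b|\|\le 2|a-b|/\max(|a|,|b|)$ together with the Hessian bound gives
\[
\|\mathbf{n}(\mathbf{y}_1)-\mathbf{n}(\mathbf{y}_2)\|\le L\,\|\mathbf{y}_1-\mathbf{y}_2\|,\qquad L:=\frac{2C_\psi}{c_0}.
\]
This needs the segment between the two points to lie in the tube, or else a chain of short segments. Two consequences follow. Since $\Gamma$ is the regular level set of a $C^{1,1}$ function, it is a $C^{1,1}$ hypersurface by the implicit function theorem. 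Moreover, $|(\mathbf{y}_2-\mathbf{y}_1)\cdot\mathbf{n}(\mathbf{y}_1)|\le \tfrac{C_\psi}{2c_0}\|\mathbf{y}_2-\mathbf{y}_1\|^2$, which follows from a Taylor expansion of $\psi$ at $\mathbf{y}_1$ using $\psi(\mathbf{y}_1)=\psi(\mathbf{y}_2)=0$. This is a uniform interior/exterior ball condition with radius $r_0\sim c_0/C_\psi$.

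\textbf{Step 2: existence.} We do not project along the normal directly. Instead we flow along the normalized gradient field $\mathbf{v}=\nabla\psi/\|\nabla\psi\|^2$. Along this flow $\psi$ changes at unit rate, so starting from $\mathbf{x}\in\mathcal{T}_h$ we reach $\Gamma$ after time $|\psi(\mathbf{x})|<h$, and the path has length at most $h/c_0$. This gives $\mathrm{dist}(\mathbf{x},\Gamma)\le h/c_0$. We then take $\mathbf{x}^*$ to be a nearest point of $\Gamma$ to $\mathbf{x}$. The Lagrange condition at a minimizer forces $\mathbf{x}-\mathbf{x}^*$ to be parallel to $\nabla\psi(\mathbf{x}^*)$, so $\mathbf{x}=\mathbf{x}^*+d\,\mathbf{n}(\mathbf{x}^*)$ with $|d|=\mathrm{dist}(\mathbf{x},\Gamma)$.

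One caveat concerns the range of $d$. The statement asks for $d\in(-h,h)$, but $|d|\le h/c_0$. This matches the claim if $c_0\ge1$ (approximately SDF), and otherwise $h$ must be reinterpreted. I would state the estimate as $|d|\le h/c_0$ and note that the interval $(-h,h)$ requires this rescaling.

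\textbf{Step 3: uniqueness (main obstacle).} Suppose $\mathbf{y}_1+d_1\mathbf{n}(\mathbf{y}_1)=\mathbf{y}_2+d_2\mathbf{n}(\mathbf{y}_2)$ with $|d_i|\le h/c_0$. The plan has two parts.

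\emph{Local injectivity.} Combine the quadratic bound from Step 1 with the Lipschitz bound on $\mathbf{n}$. Taking the tangential component of $\mathbf{y}_1-\mathbf{y}_2=d_2\mathbf{n}(\mathbf{y}_2)-d_1\mathbf{n}(\mathbf{y}_1)$ gives
\[
\|\mathbf{y}_1-\mathbf{y}_2\|\le \Bigl(\tfrac{C_\psi}{2c_0}\|\mathbf{y}_1-\mathbf{y}_2\| + L\,\tfrac{h}{c_0}\Bigr)\|\mathbf{y}_1-\mathbf{y}_2\| + \ldots
\]
If $h/c_0<r_0/2$, this forces $\mathbf{y}_1=\mathbf{y}_2$, and then $d_1=d_2$.

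\emph{Non-local injectivity.} Points of $\Gamma$ that are far apart along the surface but close in space must be excluded. I would argue via the ball condition: two normal segments of length $<r_0$ cannot meet, because each $\mathbf{y}_i$ is the unique nearest point for points at distance $<r_0$ (the classical Federer reach argument). The delicate part is that the curvature bound only holds inside $\mathcal{T}_h$, so a global reach bound requires that the tube genuinely separate distant sheets of $\Gamma$. I expect to need, and will make explicit if necessary, that the segment joining two nearby points of $\Gamma$ stays in $\mathcal{T}_h$. That condition follows from $|\psi|\le C_\psi s^2/8$ along a segment of length $s$ with both endpoints on $\Gamma$, via Taylor expansion. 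Choosing $h_{\mathrm{crit}}=c\,c_0^2/C_\psi$ with a small absolute $c$ then closes the argument, and this constant depends only on $c_0$ and $C_\psi$, as the lemma requires.
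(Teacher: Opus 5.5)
\textbf{Where your argument works.} Your existence half is the paper's argument: gradient flow to get $\dist(\mathbf{x},\Gamma)<h/c_0$, then a nearest point, then the Lagrange condition. Your caveat about the range of $d$ is correct. The flow only yields $|d|<h/c_0$, and for $c_0<1$ the range $(-h,h)$ genuinely fails (take $\psi=c_0$ times a signed distance function). The paper's proof passes over this too.

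Your local injectivity computation is sound, and the local/non-local split is unnecessary. The relation $\mathbf{y}_1+d_1\mathbf{n}(\mathbf{y}_1)=\mathbf{y}_2+d_2\mathbf{n}(\mathbf{y}_2)$ already forces $\|\mathbf{y}_1-\mathbf{y}_2\|\le|d_1|+|d_2|$, so every competing pair is close in space.

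\textbf{The gap.} The problem is the input to that computation. You derive both the quadratic bound on $\mathbf{n}(\mathbf{y}_1)\cdot(\mathbf{y}_2-\mathbf{y}_1)$ and the Lipschitz bound on $\mathbf{n}$ from a \emph{pointwise} Hessian bound integrated along $[\mathbf{y}_1,\mathbf{y}_2]$. Your argument that this segment stays in $\mathcal{T}_h$ is circular. The estimate $|\psi|\le C_\psi s^2/8$ presupposes the Hessian bound on the whole segment. Continuing outward from $\mathbf{y}_1$ also fails, because the initial slope $\nabla\psi(\mathbf{y}_1)\cdot(\mathbf{y}_2-\mathbf{y}_1)$ is uncontrolled.

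With only a pointwise Hessian bound this cannot be repaired. Let $B_1,B_2$ be balls of radius $R\gg h$ at distance $1.2h$, with signed distances $\sigma_1,\sigma_2$, and set $\psi:=\min\{\sigma_1,\,100\,\sigma_2,\,h\}$. (Join the balls by a handle far from the gap if you want $\Omega$ connected.)
\begin{itemize}
\item $\mathcal{T}_h$ is the disjoint union of $\{|\sigma_1|<h\}$ and $\{|\sigma_2|<h/100\}$, on which $\|\nabla\psi\|\ge 1$ and $\|\nabla^2\psi\|_{\op}\le 100/(R-h)$.
\item The midpoint $\mathbf{x}$ of the gap has $\psi(\mathbf{x})=0.6h$, so $\mathbf{x}\in\mathcal{T}_h$.
\item Yet $\mathbf{x}=\mathbf{y}_i+0.6h\,\mathbf{n}(\mathbf{y}_i)$ for both nearest points $\mathbf{y}_i\in\partial B_i$.
\end{itemize}
Since $h$ and $R$ are free, no $h_{\mathrm{crit}}(c_0,C_\psi)$ exists under your reading of the hypothesis.

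\textbf{The missing ingredient.} The paper's proof treats $C_\psi$ as a Euclidean Lipschitz constant of $\nabla\psi$ on all of $\mathcal{T}_h$. Its Step~1 bounds $\|\nabla\psi(\mathbf{x})-\nabla\psi(\mathbf{y})\|\le C_\psi\|\mathbf{x}-\mathbf{y}\|$ for arbitrary $\mathbf{x},\mathbf{y}\in\Gamma$, with no segment involved. In the example above that constant is at least $101/(1.2h)$. Consequently $\mathbf{n}$ is $L$-Lipschitz on all of $\Gamma$, with $L=2C_\psi/c_0$. This separates the sheets and gives positive reach: two nearest points of $\mathbf{x}$ satisfy $\mathbf{x}=\mathbf{y}_i+t\,\mathbf{n}(\mathbf{y}_i)$ with a common $t$, so $\|\mathbf{y}_1-\mathbf{y}_2\|=|t|\,\|\mathbf{n}(\mathbf{y}_1)-\mathbf{n}(\mathbf{y}_2)\|$ forces $|t|\ge 1/L$. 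The paper invokes this reach and embeds $\mathcal{T}_h$ into the reach neighborhood by the same gradient flow you use.

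\textbf{How your route closes under that reading.} No segment containment is needed. Put $s=\|\mathbf{y}_2-\mathbf{y}_1\|$ and $g(t)=\psi(\mathbf{y}_1+t(\mathbf{y}_2-\mathbf{y}_1))$.
\begin{enumerate}
\item Suppose $g'(0)>C_\psi s^2$. Then $g'(1)>0$ as well.
\item By continuity of $\psi$, $g$ has a first down-crossing zero $t_3\in(0,1)$ with $g'(t_3)\le 0$, at a point $\mathbf{y}_3\in\Gamma$.
\item The Lipschitz bound between $\mathbf{y}_1$ and $\mathbf{y}_3$ gives $g'(0)\le g'(t_3)+C_\psi s^2\le C_\psi s^2$, a contradiction. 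The case $g'(0)<-C_\psi s^2$ is symmetric.
\item Hence $|\mathbf{n}(\mathbf{y}_1)\cdot(\mathbf{y}_2-\mathbf{y}_1)|\le (C_\psi/c_0)\,s^2$ for \emph{all} pairs in $\Gamma$.
\end{enumerate}
Your inequality then yields $\mathbf{y}_1=\mathbf{y}_2$ whenever $|d_1|,|d_2|\le\eta$ with $4C_\psi\eta<c_0$, for example $\eta=h/c_0$ and $h<c_0^2/(4C_\psi)$.
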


The full proof is provided in \ref{app:projection}.
% \begin{proof}[Proof sketch]
% Under Assumption~\ref{ass:geom}, the zero level set $\Gamma=\{\psi=0\}$ is a $C^{1,1}$ embedded hypersurface with Lipschitz continuous unit normal and uniformly bounded curvature. Classical tubular-neighborhood (positive-reach) results for $C^{1,1}$ hypersurfaces imply the existence of a radius $h_{\mathrm{crit}}>0$, depending only on $c_0$ and $C_\psi$, such that the nearest-point projection onto $\Gamma$ is single-valued in $\mathcal T_h$ for all $h<h_{\mathrm{crit}}$. Within this neighborhood, every point admits a unique normal-coordinate representation of the form \eqnref{eq:projection_relation}. A complete proof is provided in~\ref{app:projection}.
% \end{proof}

This result confirms that INRs need not strictly satisfy the Eikonal equation $\|\nabla \psi\|=1$. As long as gradients are non-vanishing ($c_0 > 0$) and curvature is controlled, the geometry is valid for simulation. 

In \secref{sec:error_bounds}, we use INR based geometric representation with unfitted methods, mainly \textbf{SBM}\citep{main2018shifted1,main2018shifted2} with octrees as background mesh. The method precisely computes $\mathbf{x}-\mathbf{x^*}$ to enforce boundary condition.

\section{Geometric Perturbation Analysis}
\label{sec:perturbation}

In practice, we approximate the true geometry $\phi$ with a neural network $\psi$. We now quantify the geometric error introduced by this approximation. Let:
\[
\phi(\mathbf{x}) \quad \text{(True Level Set)}, \qquad \psi(\mathbf{x}) \quad \text{(Neural Approximation)}.
\]
We assume both fields satisfy \assmref{ass:geom} in a common tube $\mathcal{T}_h$. Furthermore, we assume the network is trained to a uniform tolerance:
\begin{equation}\label{eq:compatibility}
\|\psi - \phi\|_{L^\infty(\mathcal{T}_h)} \le \varepsilon_\infty.
\end{equation}

\begin{lemma}[Hausdorff Distance Bound]\label{lem:hausdorff}
Let $\Gamma = \{\phi=0\}$ represent true zero-level set and $\Gamma^p = \{\psi=0\}$ represent corresponding zero-level set given by a neural approximation. Under the regularity conditions of \assmref{ass:geom} and the compatibility \eqnref{eq:compatibility}, the Hausdorff distance is bounded by:
\begin{equation}
\distH(\Gamma, \Gamma^p) \le \frac{\varepsilon_\infty}{\min(c_0, \tilde{c}_0)} + \mathcal{O}(\varepsilon_\infty^2),
\end{equation}
where $c_0, \tilde{c}_0$ are the lower bounds on $\|\nabla \phi\|$ and $\|\nabla \psi\|$ respectively.
\end{lemma}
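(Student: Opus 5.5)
The plan is to bound the two one-sided distances separately, using one argument and then swapping the roles of $\phi$ and $\psi$. For $\sup_{\mathbf{x}\in\Gamma}\dist(\mathbf{x},\Gamma^p)$, fix $\mathbf{x}\in\Gamma$. Then $\phi(\mathbf{x})=0$, so the compatibility bound \eqnref{eq:compatibility} gives $|\psi(\mathbf{x})|\le\varepsilon_\infty$. We travel from $\mathbf{x}$ along a steepest-descent direction of $|\psi|$ and show that a zero of $\psi$ is reached within distance about $\varepsilon_\infty/\tilde c_0$.

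Concretely, set $\mathbf{m}=-\operatorname{sign}(\psi(\mathbf{x}))\,\nabla\psi(\mathbf{x})/\|\nabla\psi(\mathbf{x})\|$ and $g(t)=\psi(\mathbf{x}+t\mathbf{m})$. Since $\psi\in C^{1,1}$ with $\|\nabla^2\psi\|_{\op}\le C_\psi$ (\assmref{ass:geom}), Taylor's theorem with Lipschitz gradient gives
\[
|g(t)-\psi(\mathbf{x})+t\,\operatorname{sign}(\psi(\mathbf{x}))\|\nabla\psi(\mathbf{x})\||\le \tfrac12 C_\psi t^2 .
\]
Using $\|\nabla\psi(\mathbf{x})\|\ge\tilde c_0$, take $t^\ast=|\psi(\mathbf{x})|/\tilde c_0+2C_\psi\varepsilon_\infty^2/\tilde c_0^3$. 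A short computation shows that $g(t^\ast)$ has the sign opposite to $\psi(\mathbf{x})$ once $\varepsilon_\infty$ is small. Alternatively, one can solve the quadratic inequality $|\psi(\mathbf{x})|-\tilde c_0 t+\tfrac12C_\psi t^2\le 0$ exactly: its smallest root is
\[
t_-=\frac{\tilde c_0-\sqrt{\tilde c_0^2-2C_\psi|\psi(\mathbf{x})|}}{C_\psi}=\frac{|\psi(\mathbf{x})|}{\tilde c_0}+\mathcal{O}(\varepsilon_\infty^2).
\]
The intermediate value theorem then gives a zero of $\psi$ on the segment, so
\[
\dist(\mathbf{x},\Gamma^p)\le \frac{\varepsilon_\infty}{\tilde c_0}+\frac{C_\psi\varepsilon_\infty^2}{\tilde c_0^3}\cdot\mathcal{O}(1).
\]
Exchanging $\phi$ and $\psi$, which is legitimate because $\phi$ also satisfies \assmref{ass:geom} with constant $c_0$, bounds the other one-sided distance by $\varepsilon_\infty/c_0+\mathcal{O}(\varepsilon_\infty^2)$. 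Taking the maximum of the two yields the stated bound with $\min(c_0,\tilde c_0)$.

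The main obstacle is keeping the segment $\{\mathbf{x}+t\mathbf{m}:0\le t\le t^\ast\}$ inside the tube $\mathcal{T}_h$, where the bounds of \assmref{ass:geom} and \eqnref{eq:compatibility} hold. The tube is defined through level-set values, not Euclidean distance, so this must be checked.

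The first step here is to impose smallness: require $\varepsilon_\infty$ small enough that $2C_\psi\varepsilon_\infty<\tilde c_0^2$, so the discriminant is positive, and that $t^\ast$ is below the width guaranteed by \lemref{lem:projection}. Along the segment, $|\phi|$ changes by at most $\|\nabla\phi\|_{L^\infty}t^\ast$ plus a quadratic term. Since $\phi(\mathbf{x})=0$, the segment stays in $\{|\phi|<h\}$ provided $\varepsilon_\infty\lesssim h\,\tilde c_0/\|\nabla\phi\|_{L^\infty}$. The same argument applies with the roles of $\phi$ and $\psi$ swapped.

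A secondary subtlety is that the gradient lower bound must be used at the starting point $\mathbf{x}$ only. This is enough, because the Hessian bound controls the deviation of $\nabla\psi$ along the segment. This is also why the remainder is $\mathcal{O}(\varepsilon_\infty^2)$, with constant proportional to $C_\psi/\min(c_0,\tilde c_0)^3$.
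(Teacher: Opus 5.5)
Your proposal is correct and follows the paper's proof essentially step for step. The shared steps are:
\begin{itemize}
\item a ray from $\mathbf{x}\in\Gamma$ in the direction $-\operatorname{sign}(\psi(\mathbf{x}))\nabla\psi(\mathbf{x})/\|\nabla\psi(\mathbf{x})\|$;
\item a quadratic majorant $|\psi(\mathbf{x})|-\tilde c_0 t+\tfrac12 C_\psi t^2$ from the Lipschitz gradient;
\item its smaller root $|\psi(\mathbf{x})|/\tilde c_0+\mathcal{O}(\varepsilon_\infty^2)$ combined with the intermediate value theorem;
\item the symmetric argument with $\phi$ and $\psi$ exchanged, followed by taking the maximum.
\end{itemize}
Your explicit check that the segment stays inside the level-set tube, via a bound on the change of $|\phi|$ along it, is slightly more careful than the paper's sufficient condition $s_{-x}\le h$, which implicitly treats the tube width as a Euclidean distance.
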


The full proof is provided in \ref{app:hausdorff}.

\begin{remark}[Impact of Vanishing Gradients]
The bound is scaled by $1/c_0$. This highlights a critical training requirement: merely minimizing the residual $|\psi - \phi|$ is insufficient. The training loss must also enforce the Eikonal constraint or gradient regularization to prevent $c_0 \to 0$. If the INR gradients vanish at the interface, the geometric error $\distH$ becomes unbounded, destroying simulation accuracy.
\end{remark}

\section{Stability and Discretization on Perturbed Domains}
\label{sec:stability}

In this section we quantify how geometric approximation errors in an INR-defined domain
propagate to the solution of elliptic boundary value problems. Our presentation follows the
quantitative domain perturbation framework of Savaré and Schimperna~\citep{savare2002domain},
which yields sharp stability estimates under minimal geometric regularity (Lipschitz boundaries).

\subsection{Motivation}
Let $u$ denote the weak solution of an elliptic problem posed on a domain $\Omega\subset\R^n$.
A fundamental question in stability theory and numerical analysis is:

\begin{quote}
How does the solution $u$ change when $\Omega$ is replaced by a perturbed domain $\Omega^p$?
\end{quote}

This question arises in many classical settings, including stability of variational solutions
on varying domains, shape optimization~\citep{henrot1994continuity}, and error analysis of
polygonal boundary approximation~\citep{ciarlet1972interpolation}. In our setting, the
perturbation $\Omega^p$ comes from an INR-reconstructed geometry. While qualitative tools
such as $\Gamma$-convergence or Mosco convergence yield convergence statements, our goal is a
\emph{quantitative} bound that depends explicitly on:
(i) a distance between domains, (ii) boundary regularity parameters, and (iii) norms of the forcing $f$.

\subsection{Problem formulation}
Let $D\subset\R^n$ be a bounded background domain containing both the true domain $\Omega$
and the perturbed domain $\Omega^p$. We consider the Dirichlet problems
\begin{align}
\label{eq:pde_true_dir_rewrite}
\text{(True)}\quad
\begin{cases}
-\nabla\cdot(\kappa\nabla u)=f & \text{in }\Omega,\\
u=0 & \text{on }\Gamma:=\partial\Omega,
\end{cases}
\qquad
\text{(Perturbed)}\quad
\begin{cases}
-\nabla\cdot(\kappa\nabla u^p)=f & \text{in }\Omega^p,\\
u^p=0 & \text{on }\Gamma^p:=\partial\Omega^p,
\end{cases}
\end{align}
where $\kappa\in L^\infty(D)$ satisfies uniform ellipticity
\[
0<\kappa_{\min}\le \kappa(x)\le \kappa_{\max}<\infty,
\]
and $f\in L^2(D)$. We extend solutions by zero outside their domains so that
$u,u^p\in H^1_0(D)$ for comparison on $D$.

\begin{remark}[Level-set convention]
In our INR setting we represent the domain by a level set:
$\Omega=\{\phi<0\}$ and $\Omega^p=\{\psi<0\}$, so that
$\Gamma=\partial\Omega=\{\phi=0\}$ and $\Gamma^p=\partial\Omega^p=\{\psi=0\}$
(in the absence of topological pathologies).
\end{remark}

\begin{remark}[Non-homogeneous boundary conditions]
For boundary data $u|_{\Gamma}=g$, introduce a lifting $w\in H^1(D)$ with $w|_{\Gamma}=g$
(and $w|_{\Gamma^p}\approx g$), and apply the analysis to the homogenized unknowns
$\tilde u=u-w$ and $\tilde u^p=u^p-w$.
\end{remark}

\subsection{Geometric regularity: uniform cone condition}
The stability theory of~\citet{savare2002domain} requires a minimal boundary regularity
assumption formulated via a uniform cone condition.

\begin{assumption}[Uniform $(\rho,\theta)$-cone condition]
\label{ass:cone}
The domain $\Omega\subset\mathbb{R}^n$ satisfies a uniform $(\rho,\theta)$-cone condition if
there exist constants $\rho>0$ and $\theta\in(0,\pi/2)$ such that for every $x_0\in\partial\Omega$
there exists a unit vector $n(x_0)\in\mathbb S^{n-1}$ such that, for all
\[
h \in C_{\rho,\theta}(n(x_0))
:= \big\{ h\in\mathbb{R}^n : 0<|h|<\rho,\; h\cdot n(x_0) > |h|\cos\theta \big\},
\]
the inclusions
\begin{equation}
\label{eq:cone_inclusions}
(B_{3\rho}(x_0)\cap \Omega) - h \subset \Omega,
\qquad
(B_{3\rho}(x_0)\setminus \Omega) + h \subset \mathbb{R}^n\setminus \Omega
\end{equation}
hold.
\end{assumption}

The cone condition is equivalent to uniform Lipschitz regularity of $\partial\Omega$; see
\cite[Definition~2.6 and Remark~2.7]{savare2002domain}.

\begin{remark}[Relation to INR regularity]
Our level-set regularity assumption (\assmref{ass:geom}) implies that $\partial\Omega$ is a
$C^{1,1}$ hypersurface, hence Lipschitz, and therefore satisfies \assmref{ass:cone}; see,
e.g.,~\cite[Theorem~I.2.2.2]{grisvard1985elliptic}.
\end{remark}

\subsection{Main stability estimate}
Let
\[
d_\star := d_H(\Gamma,\Gamma^p)
\]
denote the Hausdorff distance between boundaries.

\begin{theorem}[Stability under domain perturbation {\cite[Theorems~1--2 and Corollary~8.2]{savare2002domain}}]
\label{thm:pde_stability}
Assume that $\Omega\subset\mathbb{R}^n$ satisfies \assmref{ass:cone}.
Suppose that the perturbation is sufficiently small:
\begin{equation}
\label{eq:small_perturb}
d_\star \le \tfrac12\,\rho\sin\theta.
\end{equation}
Let $u\in H^1_0(\Omega)$ and $u^p\in H^1_0(\Omega^p)$ be the weak solutions of
\eqnref{eq:pde_true_dir_rewrite}. Then, after extension by zero to $D$, the following
estimates hold:
\begin{align}
\label{eq:energy_bound_clean}
\|\nabla u - \nabla u^p\|_{L^2(D)}
&\le C_1\,
\|f\|_{L^2(D)}^{1/2}\|f\|_{H^{-1}(D)}^{1/2}
\left(\frac{d_\star}{\rho\sin\theta}\right)^{1/2},
\\
\label{eq:L2_bound_clean}
\|u - u^p\|_{L^2(D)}
&\le C_2\,
\|f\|_{L^2(D)}^{1/2}\|f\|_{H^{-1}(D)}^{1/2}
\left(\frac{d_\star}{\rho\sin\theta}\right),
\end{align}
where $C_1$ and $C_2$ depend only on $n$, $\kappa_{\max}/\kappa_{\min}$, and $\operatorname{diam}(D)$.
\end{theorem}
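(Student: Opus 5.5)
The plan follows the Savaré--Schimperna strategy. First establish Besov regularity of the solutions, then turn the Hausdorff closeness of the domains into a small-measure ``layer'' estimate, and finally combine the two through the variational formulation.

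\textbf{Step 1 (Besov regularity near the boundary).} Using \assmref{ass:cone}, we apply the Nirenberg difference-quotient technique with translations $h\in C_{\rho,\theta}(n(x_0))$. By \eqref{eq:cone_inclusions}, shifting a function of $H^1_0(\Omega)$ by $-h$ inside $B_{3\rho}(x_0)$ keeps it admissible. Testing the weak form with $u(\cdot+h)-u$, suitably localized, and using $\kappa_{\min}\le\kappa\le\kappa_{\max}$ gives
\[
\|\nabla u(\cdot+h)-\nabla u\|_{L^2}^2\lesssim |h|\,\|f\|_{L^2}\|\nabla u\|_{L^2}.
\]
Since $\|\nabla u\|_{L^2}\lesssim\|f\|_{H^{-1}}$ by Lax--Milgram, this yields
\[
\|u\|_{B^{3/2}_{2,\infty}}\lesssim \|f\|_{L^2}^{1/2}\|f\|_{H^{-1}}^{1/2}(\rho\sin\theta)^{-1/2}.
\]
The delicate point is that only cone directions are admissible, so the full gradient increment must be recovered from them. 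Savaré's device for this is to use the $\sin\theta$-spread of the cone to span all directions.

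\textbf{Step 2 (boundary-layer estimate).} Let $S_\delta=\{x:\dist(x,\Gamma)<\delta\}$ with $\delta\simeq d_\star$. The $B^{3/2}_{2,\infty}$ bound together with $u=0$ on $\Gamma$ gives, by a Hardy-type argument, the layer estimates
\[
\|\nabla u\|_{L^2(S_\delta\cap\Omega)}\lesssim \delta^{1/2}\|u\|_{B^{3/2}_{2,\infty}},\qquad
\|u\|_{L^2(S_\delta\cap\Omega)}\lesssim\delta\,\|\nabla u\|_{L^2(S_\delta\cap\Omega)}.
\]
The same bounds hold for $u^p$ on $\Omega^p$. Condition \eqref{eq:small_perturb} is used to show that $\Omega^p$ inherits a cone condition with comparable parameters $(\rho/2,\theta/2)$, so that Step 1 applies to $u^p$ as well.

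\textbf{Step 3 (energy comparison).} Construct a cut-off $\chi_\delta$ that vanishes on $S_{2\delta}$, equals $1$ away from $S_{3\delta}$, and satisfies $|\nabla\chi_\delta|\lesssim\delta^{-1}$. Then $\chi_\delta u\in H^1_0(\Omega\cap\Omega^p)$ is a valid test function in both problems. We write
\[
\|\nabla(u-u^p)\|^2\lesssim a(u-u^p,\,u-\chi_\delta u)+a(u-u^p,\,\chi_\delta u-u^p),
\]
and symmetrically swap the roles of $u$ and $u^p$. The Galerkin-type cancellation $a(u-u^p,\chi_\delta u)=0$ holds because both solutions satisfy the same equation against functions supported in $\Omega\cap\Omega^p$. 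What remains are terms like $\|\nabla(u-\chi_\delta u)\|$, which by Step 2 are $\lesssim \delta^{1/2}\|u\|_{B^{3/2}_{2,\infty}}$; the term with $\nabla\chi_\delta$ is handled by the Hardy bound. Combining these and inserting $\delta\simeq d_\star$ gives \eqref{eq:energy_bound_clean}.

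\textbf{Step 4 ($L^2$ bound).} Apply an Aubin--Nitsche duality argument with the dual problems posed on $\Omega$ and $\Omega^p$ and right-hand side $u-u^p$. Rerunning Step 3 for the dual solutions contributes a second factor $d_\star^{1/2}$, which yields \eqref{eq:L2_bound_clean}.

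The main obstacle is Step 1. The uniform-in-$x_0$ localized difference-quotient estimate must have constants depending only on $n$, the ellipticity ratio and $\operatorname{diam}(D)$, and the transfer of the cone condition to $\Omega^p$ under \eqref{eq:small_perturb} must be made precise. For this I would follow Savaré's covering argument with balls $B_{3\rho}(x_0)$ of bounded overlap.
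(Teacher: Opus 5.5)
\textbf{The proposal has a genuine gap.} Step~2 claims that \eqref{eq:small_perturb} makes $\Omega^p$ satisfy a cone condition with parameters $(\rho/2,\theta/2)$, and this is false. Smallness of $d_\star$ controls where $\Gamma^p$ lies, not its shape. If $\Gamma$ is locally flat, $\Gamma^p$ may be the graph of $\epsilon\sin(x_1/\epsilon^2)$. That graph is within $\epsilon$ of $\Gamma$ but has slopes of order $1/\epsilon$; it could also carry cusps or short slits. So no cone condition with parameters comparable to $(\rho,\theta)$ holds. The theorem assumes regularity of $\Omega$ only, and its constants may not depend on $\Omega^p$.

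Hence Step~1 is unavailable for $u^p$, and your symmetric Step~3 does not close. It amounts to $\|u-u^p\|_a\le\|(1-\chi)u\|_a+\|(1-\chi')u^p\|_a$. Bounding the last term through $\|\nabla u^p\|_{L^2(S)}\le\|\nabla(u-u^p)\|+\|\nabla u\|_{L^2(S)}$ returns the unknown with an $O(1)$ coefficient, which cannot be absorbed.

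The missing idea is to keep the argument asymmetric and use the equation for $u^p$ to remove $u^p$ from the first slot. Let $e=u-u^p$, choose cut-offs with $\chi u,\chi' u^p\in H^1_0(\Omega\cap\Omega^p)$, let $S\subset S'$ be layers of width $\sim\delta$ around $\Gamma$, and set $w:=(1-\chi')u^p\in H^1_0(\Omega^p)$. Then
\[
a(e,e)=a\bigl(e,(1-\chi)u\bigr)-\bigl[a(u,w)-(f,w)\bigr].
\]
Here $|a(u,w)|\le\kappa_{\max}\|\nabla u\|_{L^2(S)}\|\nabla w\|_{L^2}$ and $|(f,w)|\le\|f\|_{L^2}\|w\|_{L^2}\lesssim\delta\,\|f\|_{L^2}\|f\|_{H^{-1}}$. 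A layer Poincar\'e inequality for $u^p$ gives $\|\nabla w\|\lesssim\|\nabla e\|+\|\nabla u\|_{L^2(S')}$. It only needs the exterior cones of $\Omega$, pushed outward by $\sim d_\star$, to miss $\Omega^p$. Every occurrence of $u^p$ is therefore paired with a small factor, and Young's inequality gives \eqref{eq:energy_bound_clean} from the Besov regularity of $u$ alone. The duality step needs the same bookkeeping.

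\textbf{Two further points.} First, the inclusion $\chi_\delta u\in H^1_0(\Omega\cap\Omega^p)$ requires $\{x\in\Omega:\dist(x,\Gamma)\ge 2\delta\}\subset\Omega^p$. This does not follow from $d_H(\Gamma,\Gamma^p)\le d_\star$. Take $\Omega$ the unit disk, $\Omega^p=\{1-\eta<|x|<1\}$ and $f\equiv 1$. Then $d_\star=\eta$, but $\|u^p\|_{L^\infty}=O(\eta^2)$, so $\|u-u^p\|_{L^2(D)}\to\|u\|_{L^2(\Omega)}>0$ and \eqref{eq:L2_bound_clean} fails as $\eta\to0$. You need closeness of the sets themselves, e.g.\ $\Omega\,\triangle\,\Omega^p\subset\{\dist(\cdot,\Gamma)<d_\star\}$, i.e.\ genuine control of the excess and complementary excess. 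In the INR setting this would come from sign agreement of $\phi$ and $\psi$ away from the interface. The statement as written, and the paper's claim that the complementary excess is controlled by $d_\star$, gloss over this too.

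Second, in Step~1 the shifted test function produces the commutator $\int_D(\kappa(\cdot+h)-\kappa)\,\nabla u(\cdot+h)\cdot\nabla v$. This is $O(|h|)$ only if $\kappa$ is Lipschitz. For merely bounded $\kappa$ the $B^{3/2}_{2,\infty}$ bound is false: checkerboard coefficients give $|\nabla u|\sim r^{\alpha-1}$ with $\alpha$ arbitrarily small. So constants depending only on $\kappa_{\max}/\kappa_{\min}$ cannot be obtained this way.
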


We briefly summarize the mechanism underlying \thmref{thm:pde_stability}; see
\citet{savare2002domain} for full proofs.

\paragraph{Variational formulation in a common space.}
Let $V:=H^1_0(D)$ and define
\[
a(v,w):=\int_D \kappa \nabla v\cdot \nabla w\,dx,
\]
which is coercive and continuous on $V$.
Although $u$ and $u^p$ solve problems on different domains, we may view them as posed in
different closed subspaces
\[
V_\Omega := H^1_0(\Omega)\subset V,
\qquad
V_{\Omega^p}:=H^1_0(\Omega^p)\subset V.
\]
An abstract approximation argument yields a bound in terms of distances to these subspaces.

\begin{lemma}[Abstract stability {\cite[Lemma~4.2]{savare2002domain}}]
\label{lem:abstract_stability}
There exists $\sigma>0$ depending only on the coercivity/continuity constants of $a(\cdot,\cdot)$ such that
\[
\|u-u^p\|_{V}
\le \sigma\Bigl(d(u,V_{\Omega^p}) + d(u^p,V_\Omega)\Bigr),
\qquad
d(v,W):=\inf_{w\in W}\|v-w\|_V.
\]
\end{lemma}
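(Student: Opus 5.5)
The plan is to compare both solutions with a single test function. Write $V=H^1_0(D)$, and let $\alpha=\kappa_{\min}$ and $M=\kappa_{\max}$ be the coercivity and continuity constants of $a(\cdot,\cdot)$. The zero-extended solutions are characterized by the Galerkin conditions
\[
a(u,v)=\langle f,v\rangle \quad \forall v\in V_\Omega,
\qquad
a(u^p,w)=\langle f,w\rangle \quad \forall w\in V_{\Omega^p}.
\]
The difficulty is that $u-u^p$ lies in neither subspace. We therefore introduce the $V$-best approximations $\tilde u\in V_{\Omega^p}$ of $u$ and $\tilde u^p\in V_\Omega$ of $u^p$. These exist because $V_\Omega$ and $V_{\Omega^p}$ are closed subspaces of a Hilbert space. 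They satisfy $\|u-\tilde u\|_V=d(u,V_{\Omega^p})=:\delta_1$ and $\|u^p-\tilde u^p\|_V=d(u^p,V_\Omega)=:\delta_2$.

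\textbf{Key identity.} Set $e=u-u^p$ and expand
\[
\alpha\|e\|_V^2\le a(e,e)=a(u,u-u^p)-a(u^p,u-u^p).
\]
In the first term we write $u-u^p=(u-\tilde u^p)+(\tilde u^p-u^p)$. Since $u-\tilde u^p\in V_\Omega$, the Galerkin condition for $u$ gives $a(u,u-\tilde u^p)=\langle f,u-\tilde u^p\rangle$. In the second term we write $u-u^p=(u-\tilde u)+(\tilde u-u^p)$. Since $\tilde u-u^p\in V_{\Omega^p}$, we get $a(u^p,\tilde u-u^p)=\langle f,\tilde u-u^p\rangle$. Subtracting, the $f$-terms combine to
\[
\langle f,(u-\tilde u^p)-(\tilde u-u^p)\rangle=\langle f,(u-\tilde u)-(\tilde u^p-u^p)\rangle .
\]
The right-hand side involves only the defect vectors $u-\tilde u$ and $u^p-\tilde u^p$. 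We then substitute $\langle f,v\rangle=a(u,v)$ or $a(u^p,v)$ where the test function allows it. Otherwise we pair with $f$ directly; but since the target constant $\sigma$ is not allowed to depend on $f$, the next step must eliminate $f$.

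\textbf{Cleaner route (planned).} To remove $f$, I would use instead the more symmetric estimate
\[
\alpha\|e\|_V^2\le a(e,e)=a(u-\tilde u^p,e)+a(\tilde u^p-u^p,e)
\]
and handle the piece $a(u-\tilde u^p,e)$ by splitting $e=(u-\tilde u)+(\tilde u-u^p)$. The pairing $a(u,\tilde u-u^p)$ and similar cross terms can be matched through the two Galerkin conditions, with the $f$-contributions cancelling exactly. The residual terms are then of the form $a(\text{defect},\cdot)$, each bounded by $M(\delta_1+\delta_2)\|e\|_V$ or by $M\delta_1\delta_2$. This should lead to a quadratic inequality
\[
\alpha\|e\|_V^2\le M(\delta_1+\delta_2)\|e\|_V + M\delta_1\delta_2,
\]
which yields $\|e\|_V\le \sigma(\delta_1+\delta_2)$ with $\sigma$ depending only on $M/\alpha$ (for instance $\sigma=2M/\alpha+1$).

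\textbf{Main obstacle.} The hardest step is the bookkeeping that makes every $f$-term cancel. Since $\sigma$ may not depend on $f$, each appearance of $\langle f,\cdot\rangle$ must be traded for $a(u,\cdot)$ or $a(u^p,\cdot)$ with a test function in the correct subspace. If the exact cancellation fails, my fallback is to follow Savaré–Schimperna's route literally. There, $u$ is the minimizer of the energy $J(v)=\tfrac12 a(v,v)-\langle f,v\rangle$ over $V_\Omega$, and $u^p$ is its minimizer over $V_{\Omega^p}$. The strong convexity of $J$ gives $\tfrac{\alpha}{2}\|v-u\|_V^2\le J(v)-J(u)$ for $v\in V_\Omega$, and the analogous bound on $V_{\Omega^p}$. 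We then apply these with $v=\tilde u^p$ and with $\tilde u$, respectively, and add. The energy differences $J(\tilde u^p)-J(u^p)$ and $J(\tilde u)-J(u)$ are expanded around the minimizers, and the first-order terms cancel pairwise. Finally, the triangle inequality $\|u-u^p\|_V\le\|u-\tilde u^p\|_V+\delta_2$ closes the estimate.
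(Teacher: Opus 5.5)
\paragraph{There is a genuine gap, at the step you flagged.} The paper does not prove this lemma; it only cites Savar\'e--Schimperna. Your argument therefore has to stand on its own, and the $f$-terms cannot be cancelled in either of your routes.

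Your key identity is correct. Introduce the residuals $R(v):=a(u,v)-\langle f,v\rangle$, which is zero on $V_\Omega$, and $R^p(v):=a(u^p,v)-\langle f,v\rangle$, which is zero on $V_{\Omega^p}$. Then the identity reads
\[
a(e,e)=R(\tilde u^p-u^p)-R^p(u-\tilde u).
\]
In general $\tilde u^p-u^p\notin V_\Omega$ and $u-\tilde u\notin V_{\Omega^p}$. So there is nothing left to trade, and each term can only be bounded by $\|R\|_{V'}\delta_2$ or $\|R^p\|_{V'}\delta_1$, that is, through $f$. The energy fallback fails for the same reason: after expanding around the minimizers, the first-order terms are $R^p(\tilde u^p-u^p)+R(\tilde u-u)$, and these do not cancel.

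No bookkeeping can repair this. You use only that $V_\Omega,V_{\Omega^p}$ are closed subspaces together with the two Galerkin relations, and at that level of generality the claim is false. Take $V=\R^2$ and $a(v,w)=v\cdot w$, so $\alpha=M=1$. Let $V_\Omega=\mathrm{span}\{e_1\}$ and $V_{\Omega^p}=\mathrm{span}\{n\}$ with $n=(\cos\theta,\sin\theta)$. Let $\langle f,v\rangle=w\cdot v$ with $w=(1,-\cot\theta)$.
\begin{itemize}
\item Then $u=e_1$ and $u^p=0$, so $\|u-u^p\|=1$.
\item Meanwhile $d(u,V_{\Omega^p})=\sin\theta$ and $d(u^p,V_\Omega)=0$.
\item The first-order term in your fallback equals $\cos^2\theta$.
\end{itemize}
As $\theta\to0$, no $\sigma=\sigma(\alpha,M)$ survives, and your quadratic inequality would assert $1\le\sin\theta$.

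The same degeneration occurs for genuine Sobolev spaces. Take $\Omega=(0,1)^2$ and $\Omega^p=(0,1)\times(-1,\delta)$, overlapping in a slab of width $\delta$, with $f\in L^2$ vanishing on $\Omega^p$. There the angle between the parts of $V_\Omega$ and $V_{\Omega^p}$ that are harmonic in the overlap tends to zero as $\delta\to0$.

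\paragraph{What the Galerkin relations do give, and what a linear bound needs.} They give an $f$-dependent square-root estimate. Since $\|R\|_{V'},\|R^p\|_{V'}\le(1+M/\alpha)\|f\|_{V'}$, the identity yields
\[
\alpha\|u-u^p\|_V^2\le\Bigl(1+\frac{M}{\alpha}\Bigr)\|f\|_{V'}\bigl(d(u,V_{\Omega^p})+d(u^p,V_\Omega)\bigr).
\]
The example shows the factor $\|f\|_{V'}$ cannot be dropped; there $\|f\|_{V'}=1/\sin\theta$.

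A bound linear in the distances needs an extra hypothesis on the pair of subspaces. Let $W:=V_\Omega\cap V_{\Omega^p}$, and let $c<1$ be the cosine of the Friedrichs angle between $V_\Omega\ominus W$ and $V_{\Omega^p}\ominus W$. For symmetric $a$:
\begin{itemize}
\item write $u-u^p=x-y$ with $x\in V_\Omega\ominus W$ and $y\in V_{\Omega^p}\ominus W$;
\item note that $d(u,V_{\Omega^p})=d(x,V_{\Omega^p})\ge\sqrt{1-c^2}\,\|x\|$, and similarly for $y$;
\item conclude $\sigma=(1-c^2)^{-1/2}$ in the energy norm.
\end{itemize}
That constant encodes the relative geometry of $\Omega$ and $\Omega^p$, not just $\alpha$ and $M$. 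As stated, the lemma therefore needs either such a geometric hypothesis or an $f$-dependent right-hand side before any proof can close.
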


\paragraph{Excess, complementary excess, and Hausdorff control.}
Directed set discrepancies are quantified via the excess
\[
e(X,Y):=\sup_{x\in X}\dist(x,Y),
\qquad
\dist(x,Y):=\inf_{y\in Y}|x-y|,
\]
and the complementary excess $\check e(X,Y):=e(\R^n\setminus Y,\R^n\setminus X)$.
Under \assmref{ass:cone} and \eqnref{eq:small_perturb}, the complementary excess between
$\Omega$ and $\Omega^p$ is controlled by the Hausdorff distance; see \cite[Lemma~7.4]{savare2002domain}:
\[
\check e(\Omega,\Omega^p)\ \lesssim\ \frac{d_\star}{\sin\theta}.
\]

\paragraph{Localization and shift construction.}
Using a partition of unity and the cone condition, one constructs comparison functions by
shifting locally in admissible cone directions, producing difference-quotient terms with
$|h|\sim d_\star/\sin\theta$; see \cite[Proposition~5.1 and Lemma~7.3]{savare2002domain}.

\paragraph{Difference quotient regularity in Lipschitz domains.}
Elliptic solutions on Lipschitz domains satisfy optimal difference quotient estimates
corresponding to Besov regularity $B^{3/2}_{2,\infty}$; see
\cite[Theorem~6.2]{savare2002domain} and \citet{jerison1995}.
In particular,
\[
\|u(\cdot+h)-u(\cdot)\|_{L^2}+\|\nabla u(\cdot+h)-\nabla u(\cdot)\|_{L^2}
\ \lesssim\ |h|^{1/2}\,\|f\|_{L^2}.
\]

\paragraph{Assembling the estimate; duality for the \(L^2\) bound.}
Combining Lemma~\ref{lem:abstract_stability} with the localization and difference quotient
bounds yields the energy estimate \eqnref{eq:energy_bound_clean}.
The $L^2$ estimate \eqnref{eq:L2_bound_clean} follows by a duality (Aubin--Nitsche type)
argument adapted to perturbed domains; see \cite[Theorem~8.4]{savare2002domain}.

\begin{remark}[Optimality of the $1/2$ exponent]
The square-root dependence $d_\star^{1/2}$ in \eqnref{eq:energy_bound_clean} is sharp in general,
reflecting the regularity barrier for elliptic problems on Lipschitz domains; see
\citet{savare2002domain,jerison1995}.
\end{remark}

\subsection{Connection to INR geometry}
\thmref{thm:pde_stability} bridges INR approximation quality and PDE solution accuracy.
Combining it with the geometric perturbation estimate from \secref{sec:perturbation} yields
an explicit INR-to-PDE propagation bound.

\begin{corollary}[INR-to-PDE error propagation in \(L^2\)]
\label{cor:inr_pde_error_L2}
Under Assumptions~\ref{ass:geom} and~\ref{ass:cone}, let $\Omega=\{\phi<0\}$ and
$\Omega^p=\{\psi<0\}$. Assume the INR field $\psi$ satisfies, on a tubular neighborhood
$\mathcal T_h$,
\[
\|\phi-\psi\|_{L^\infty(\mathcal T_h)}\le \varepsilon_\infty.
\]
Then the corresponding solutions $u$ on $\Omega$ and $u^p$ on $\Omega^p$ satisfy
\begin{equation}
\label{eq:inr_to_pde_L2}
\|u-u^p\|_{L^2(D)}
\le
\tilde C_{L^2}\,
\|f\|_{L^2(D)}^{1/2}\,
\|f\|_{H^{-1}(D)}^{1/2}\,
\left(\frac{\varepsilon_\infty}{c_0\,\rho\sin\theta}\right),
\end{equation}
where $c_0:=\inf_{x\in\Gamma}\|\nabla\phi(x)\|$.
\end{corollary}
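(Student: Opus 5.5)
The corollary follows by composing Lemma~\ref{lem:hausdorff} with the $L^2$ estimate \eqnref{eq:L2_bound_clean} of Theorem~\ref{thm:pde_stability}, since the right-hand side of \eqnref{eq:L2_bound_clean} is linear in $d_\star$. Concretely, I would argue in three steps.

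\textbf{Step 1: bound the boundary distance.} Under Assumption~\ref{ass:geom} for both $\phi$ and $\psi$ on $\mathcal T_h$ and the compatibility condition \eqnref{eq:compatibility}, Lemma~\ref{lem:hausdorff} gives
\[
d_\star=d_H(\Gamma,\Gamma^p)\le \frac{\varepsilon_\infty}{\min(c_0,\tilde c_0)}+\mathcal O(\varepsilon_\infty^2).
\]
The corollary defines $c_0:=\inf_{\Gamma}\|\nabla\phi\|$, so this must be reconciled with $\min(c_0,\tilde c_0)$. One option is to absorb $\min(c_0,\tilde c_0)^{-1}$ into a constant multiple of $c_0^{-1}$ by using a one-sided version of the Hausdorff argument. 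Points of $\Gamma^p$ lie within $\varepsilon_\infty/c_0$ of $\Gamma$ by a mean-value argument along $\nabla\phi$. For the reverse direction, the uniform bound $|\psi-\phi|\le\varepsilon_\infty$ together with the Hessian bound gives $\|\nabla\psi\|\ge c_0-C\varepsilon_\infty^{1/2}$ near $\Gamma$, so $\tilde c_0\ge c_0/2$ for small $\varepsilon_\infty$. The other option is simply to note that $\tilde c_0$ enters only through $\tilde C_{L^2}$.

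\textbf{Step 2: make the bound clean and check smallness.} The quadratic remainder is absorbed by restricting to $\varepsilon_\infty\le\varepsilon_0$, where $\varepsilon_0$ depends on $c_0$ and $C_\psi$; this gives $d_\star\le 2\varepsilon_\infty/c_0$. With the further restriction $\varepsilon_\infty\le \tfrac14 c_0\rho\sin\theta$, this yields $d_\star\le\tfrac12\rho\sin\theta$, which is the smallness condition \eqnref{eq:small_perturb}. I will also verify that $\Omega$ satisfies Assumption~\ref{ass:cone}, which is explicitly assumed, as is needed to invoke Theorem~\ref{thm:pde_stability}.

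\textbf{Step 3: substitute into the $L^2$ estimate.} Inserting $d_\star\le 2\varepsilon_\infty/c_0$ into \eqnref{eq:L2_bound_clean} gives \eqnref{eq:inr_to_pde_L2} with $\tilde C_{L^2}=2C_2$.

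\textbf{Main obstacle: the large-$\varepsilon_\infty$ regime.} For $\varepsilon_\infty$ large, neither Lemma~\ref{lem:hausdorff} nor Theorem~\ref{thm:pde_stability} applies, so the estimate must be recovered trivially. The energy estimate on each domain gives
\[
\|u\|_{L^2(D)}+\|u^p\|_{L^2(D)}\lesssim \|f\|_{H^{-1}(D)}
\]
by Poincaré on $D$. In this regime $\varepsilon_\infty/(c_0\rho\sin\theta)$ is bounded below. The difficulty is that the target bound has the prefactor $\|f\|_{L^2}^{1/2}\|f\|_{H^{-1}}^{1/2}$ rather than $\|f\|_{H^{-1}}$, so the two must be compared. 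Since $\|f\|_{H^{-1}(D)}\le C_P\|f\|_{L^2(D)}$ with $C_P\sim\operatorname{diam}(D)$, the interpolation gives
\[
\|f\|_{H^{-1}}\le C_P^{1/2}\|f\|_{L^2}^{1/2}\|f\|_{H^{-1}}^{1/2}.
\]
Enlarging $\tilde C_{L^2}$ by a factor depending on $\operatorname{diam}(D)$ and $\kappa_{\min}$ then covers this regime.

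If this case analysis is deemed unnecessary, I would instead state the corollary under the standing smallness hypothesis $\varepsilon_\infty\le\varepsilon_0$ from Step 2. I expect the delicate point to be the Step 1 reconciliation of $c_0$ with $\min(c_0,\tilde c_0)$, rather than the substitution itself.
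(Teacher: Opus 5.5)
Your proposal is correct and follows the paper's proof: insert $d_H(\Gamma,\Gamma^p)\lesssim\varepsilon_\infty/c_0$ from Lemma~\ref{lem:hausdorff} into the estimate \eqref{eq:L2_bound_clean} of Theorem~\ref{thm:pde_stability}, which is linear in $d_\star$. Your additional steps are sound and fill in what the paper's one-line argument leaves implicit: checking the smallness hypothesis \eqref{eq:small_perturb}, replacing $\min(c_0,\tilde c_0)$ by a multiple of $c_0$, and covering large $\varepsilon_\infty$ via the trivial a priori bound together with $\|f\|_{H^{-1}}\le C_P^{1/2}\|f\|_{L^2}^{1/2}\|f\|_{H^{-1}}^{1/2}$.
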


\begin{proof}
By \lemref{lem:hausdorff}, the uniform mismatch implies
$d_H(\Gamma,\Gamma^p)\lesssim \varepsilon_\infty/c_0$ (up to higher-order terms).
Substituting this bound into Theorem~\ref{thm:pde_stability} in its \(L^2\)-stability form
\eqnref{eq:L2_bound_clean} proves \eqnref{eq:inr_to_pde_L2}.
\end{proof}

\subsection{Total error: geometry + discretization}
In practice we solve the perturbed-domain problem numerically on $\Omega^p$ using a finite
element discretization with mesh size $h_{\mathrm{mesh}}$. Let $u_h^p$ denote the discrete
solution. The total error measured on $D$ admits the decomposition
\[
\|u-u_h^p\|_{L^2(D)}
\le
\|u-u^p\|_{L^2(D)}+\|u^p-u_h^p\|_{L^2(\Omega^p)}.
\]

\begin{theorem}[A priori total \(L^2\) error estimate]
\label{thm:total_error_L2}
Under the assumptions of Corollary~\ref{cor:inr_pde_error_L2}, assume conforming finite
elements of polynomial degree $k$ are used on $\Omega^p$ and that $u^p\in H^{k+1}(\Omega^p)$.
Then there exist constants $C_{\mathrm{geom}},C_{\mathrm{FEM}}>0$, independent of
$h_{\mathrm{mesh}}$ and $\varepsilon_\infty$, such that
\begin{equation}
\label{eq:total_error_bound_L2}
\|u-u_h^p\|_{L^2(D)}
\le
C_{\mathrm{geom}}\,
\|f\|_{L^2(D)}^{1/2}\,
\|f\|_{H^{-1}(D)}^{1/2}\,
\left(\frac{\varepsilon_\infty}{c_0}\right)
+
C_{\mathrm{FEM}}\,h_{\mathrm{mesh}}^{k+1}\,|u^p|_{H^{k+1}(\Omega^p)}.
\end{equation}
\end{theorem}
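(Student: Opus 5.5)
The proof follows the triangle-inequality decomposition stated just before the theorem. We write
\[
\|u-u_h^p\|_{L^2(D)}\le \|u-u^p\|_{L^2(D)}+\|u^p-u_h^p\|_{L^2(D)},
\]
where $u_h^p$ is extended by zero outside $\Omega^p$, so the last norm equals $\|u^p-u_h^p\|_{L^2(\Omega^p)}$. The two terms are then bounded separately: one by geometric stability, the other by standard finite element theory on the fixed domain $\Omega^p$.

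For the geometric term we invoke Corollary~\ref{cor:inr_pde_error_L2} directly. It bounds $\|u-u^p\|_{L^2(D)}$ by $\tilde C_{L^2}\|f\|_{L^2}^{1/2}\|f\|_{H^{-1}}^{1/2}\,\varepsilon_\infty/(c_0\rho\sin\theta)$. We then set $C_{\mathrm{geom}}:=\tilde C_{L^2}/(\rho\sin\theta)$, which depends only on the cone parameters of $\Omega$, on $n$, on $\kappa_{\max}/\kappa_{\min}$ and on $\operatorname{diam}(D)$. In particular it is independent of $h_{\mathrm{mesh}}$ and $\varepsilon_\infty$.

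Two points need care.

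\emph{Smallness condition.} Theorem~\ref{thm:pde_stability} requires \eqref{eq:small_perturb}. Via Lemma~\ref{lem:hausdorff}, this holds once $\varepsilon_\infty\le\varepsilon_0$ for some threshold $\varepsilon_0$ depending on $c_0,\tilde c_0,\rho,\theta$. For $\varepsilon_\infty>\varepsilon_0$ the bound is trivial: the a priori estimate $\|u\|_{L^2}+\|u^p\|_{L^2}\lesssim\|f\|_{H^{-1}(D)}$, combined with $\|f\|_{H^{-1}}\le\|f\|_{L^2}^{1/2}\|f\|_{H^{-1}}^{1/2}\cdot C$ (since $\|f\|_{H^{-1}}\lesssim\|f\|_{L^2}$ by Poincar\'e on $D$), shows that enlarging $C_{\mathrm{geom}}$ by a factor $1/\varepsilon_0$ covers this case.

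\emph{Higher-order term.} The $\mathcal O(\varepsilon_\infty^2)$ term in Lemma~\ref{lem:hausdorff} is absorbed into the linear term for $\varepsilon_\infty\le\varepsilon_0$.

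For the discretization term we use the classical conforming Galerkin theory on $\Omega^p$. C\'ea's lemma gives $\|\nabla(u^p-u_h^p)\|\le(\kappa_{\max}/\kappa_{\min})\inf_{v_h}\|\nabla(u^p-v_h)\|\lesssim h_{\mathrm{mesh}}^k|u^p|_{H^{k+1}}$ via the interpolation estimate. The Aubin--Nitsche duality argument then yields the extra power of $h_{\mathrm{mesh}}$. This requires $H^2$-regularity of the dual problem $-\nabla\cdot(\kappa\nabla z)=e$ on $\Omega^p$. We obtain it from the $C^{1,1}$ boundary regularity of $\Gamma^p$ implied by Assumption~\ref{ass:geom} for $\psi$ (Grisvard), together with the assumption that $\kappa$ is Lipschitz or constant. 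This regularity hypothesis on $\kappa$ is needed here and will be stated explicitly.

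The main obstacle is uniformity of the constants. $C_{\mathrm{FEM}}$ involves the elliptic regularity constant of $\Omega^p$ and the mesh shape-regularity. Since $\Omega^p$ depends on $\psi$, we control its $C^{1,1}$ character uniformly through $\tilde c_0$ and $C_\psi$. These parameters bound the curvature of $\Gamma^p$ by $C_\psi/\tilde c_0$ and fix a uniform tubular radius via Lemma~\ref{lem:projection}, so the $H^2$-regularity constant is independent of $\varepsilon_\infty$. We also assume a conforming (body-fitted) shape-regular mesh family on $\Omega^p$ with constants independent of $\psi$; this is the point where unfitted methods such as SBM would need a separate argument. Adding the two bounds then gives \eqref{eq:total_error_bound_L2}.
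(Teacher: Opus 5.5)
Your proposal is correct and follows the same route as the paper: the triangle inequality, then Corollary~\ref{cor:inr_pde_error_L2} for $\|u-u^p\|_{L^2(D)}$ (with $1/(\rho\sin\theta)$ absorbed into $C_{\mathrm{geom}}$), then the standard $L^2$ finite element estimate on $\Omega^p$ for the discretization term. Your additional remarks make explicit hypotheses that the paper's two-line proof leaves implicit: the smallness threshold together with the trivial large-$\varepsilon_\infty$ case, absorption of the $\mathcal O(\varepsilon_\infty^2)$ term, the Lipschitz-$\kappa$ and $H^2$-regularity hypothesis needed for Aubin--Nitsche, and the uniform body-fitted mesh assumption.
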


\begin{proof}
The triangle inequality gives
\[
\|u-u_h^p\|_{L^2(D)}
\le
\|u-u^p\|_{L^2(D)}+\|u^p-u_h^p\|_{L^2(\Omega^p)}.
\]
The first term is bounded by Corollary~\ref{cor:inr_pde_error_L2}.
The second term follows from the standard $L^2$ finite element estimate:
\[
\|u^p-u_h^p\|_{L^2(\Omega^p)}\le
C\,h_{\mathrm{mesh}}^{k+1}\,|u^p|_{H^{k+1}(\Omega^p)}.
\]
Combining the two bounds yields \eqnref{eq:total_error_bound_L2}.
\end{proof}

\begin{remark}[Balancing geometry and discretization errors]
Assume the forcing norms $\|f\|_{L^2(D)}$, $\|f\|_{H^{-1}(D)}$ and the regularity seminorm
$|u^p|_{H^{k+1}(\Omega^p)}$ are $\mathcal O(1)$, and that the constants
$C_{\mathrm{geom}}$ and $C_{\mathrm{FEM}}$ are comparable. Balancing the two terms in
\eqnref{eq:total_error_bound_L2} yields the guideline
\[
\varepsilon_\infty \sim h_{\mathrm{mesh}}^{k+1}.
\]
In particular, for $k=1$ one expects $\varepsilon_\infty\sim h_{\mathrm{mesh}}^2$, while for
$k=2$ one expects $\varepsilon_\infty\sim h_{\mathrm{mesh}}^3$.
\end{remark}

\begin{remark}[Practical implications (order-of-magnitude)]
The bound \eqnref{eq:total_error_bound_L2} shows that INR geometry error dominates unless
$\varepsilon_\infty$ is reduced in tandem with the mesh resolution. For example, if one aims
for an $L^2$ discretization error of order $h_{\mathrm{mesh}}^{k+1}$, then $\varepsilon_\infty$
should be chosen of the same order (under the assumptions of the previous remark). Thus:
\begin{itemize}
\item For $k=1$ and $h_{\mathrm{mesh}}=10^{-2}$, one targets $\varepsilon_\infty\approx 10^{-4}$.
\item For $k=2$ and $h_{\mathrm{mesh}}=10^{-2}$, one targets $\varepsilon_\infty\approx 10^{-6}$.
\item For $k=2$ and $h_{\mathrm{mesh}}=10^{-3}$, one targets $\varepsilon_\infty\approx 10^{-9}$.
\end{itemize}
These scalings motivate training strategies that adapt INR tolerance to the intended
discretization accuracy, avoiding over-training when coarse meshes are sufficient.
\end{remark}

\section{Result}
\label{sec:results}

This section provides numerical evidence supporting the geometric assumptions and
perturbation bounds developed in \secref{sec:perturbation}.
We proceed in two stages.
First, using analytic implicit fields in 2D, we isolate the role of geometric regularity
(\assmref{ass:geom}) in ensuring that the normal/closest-point projection is
well-defined and stable (\lemref{lem:projection}); we visualize where the gradient
non-degeneracy holds and verify projection behavior under Newton iteration.
Second, we repeat the same diagnostics for SIREN-based INRs trained to approximate a
unit-circle signed distance field, and empirically validate the Hausdorff perturbation
estimate (\lemref{lem:hausdorff}) across increasing training budgets.
Finally, we connect geometric mismatch to downstream PDE accuracy by solving a Poisson
problem on perturbed geometries with an unfitted octree discretization and shifted
boundary method (SBM) boundary treatment, demonstrating how solution error scales with
Hausdorff distance and how mesh refinement saturates when geometry error dominates.
\subsection{Numerical validation of closest projection}
% \begin{figure*}[t]
% \centering
% \setlength{\tabcolsep}{3pt}
% \renewcommand{\arraystretch}{1.0}

\begin{figure*}[t]
\centering
\setlength{\tabcolsep}{4pt}
\renewcommand{\arraystretch}{1.0}

\begin{tabular}{ccc}
\includegraphics[width=0.31\textwidth]{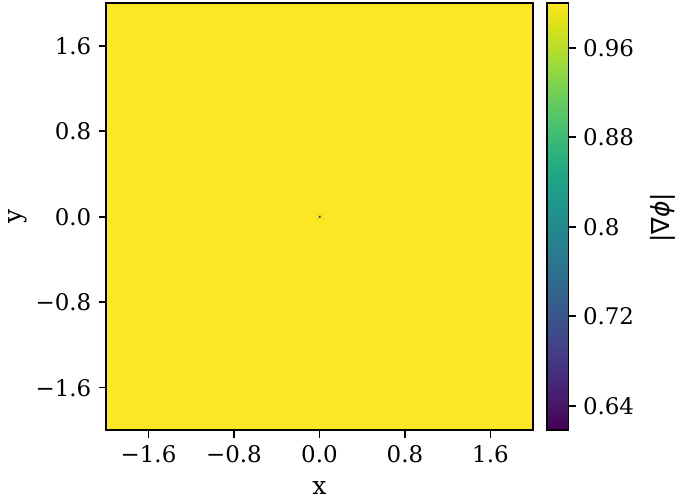} &
\includegraphics[width=0.31\textwidth]{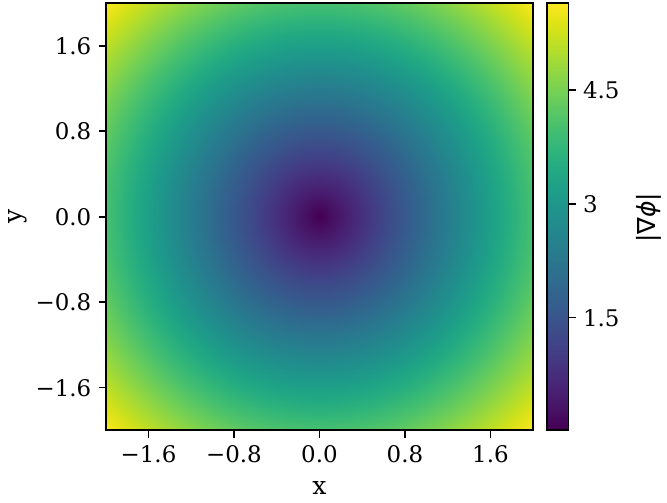} &
\includegraphics[width=0.31\textwidth]{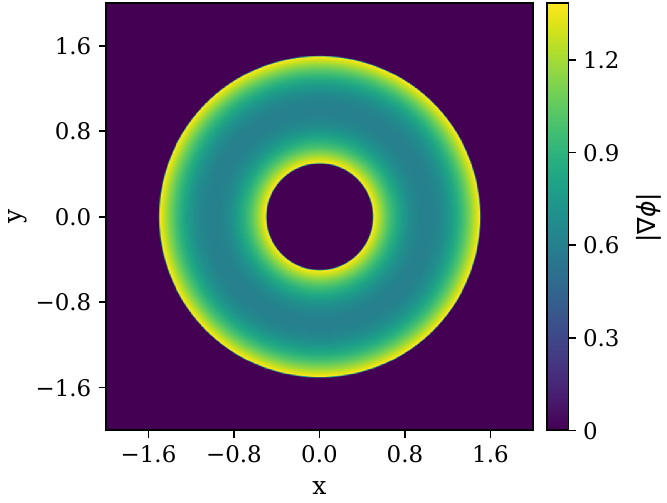} \\[-2pt]
\includegraphics[width=0.31\textwidth]{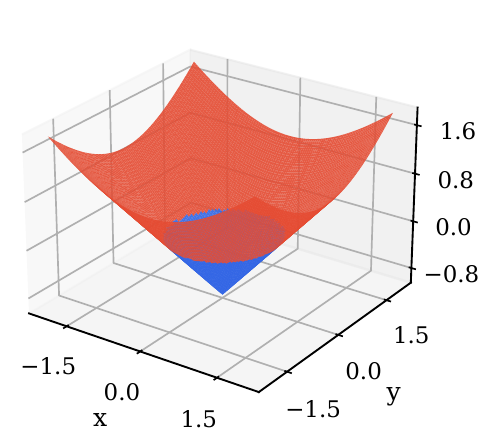} &
\includegraphics[width=0.31\textwidth]{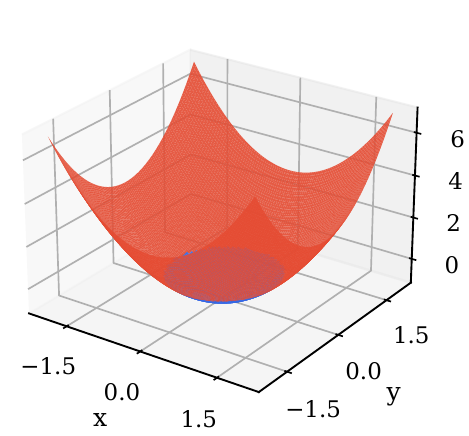} &
\includegraphics[width=0.31\textwidth]{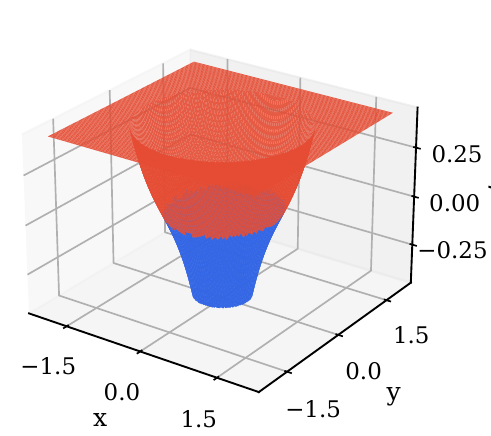} \\
\end{tabular}

\vspace{2pt}
\caption{\textbf{Validation of \assmref{ass:geom} on three analytic level-set constructions.}
Columns: (Case 1) Eikonal signed-distance field $\phi=d$,
(Case 2) non-eikonal quadratic $\phi=r^2-R^2$,
(Case 3) banded non-degenerate field (designed to maintain $|\nabla\phi|\ge c_0$ only within a prescribed tubular band).
Top row: gradient magnitude $|\nabla\phi|$, illustrating (non-)degeneracy near the zero level set.
Bottom row: surface plot $z=\phi(x,y)$, highlighting the qualitative shape of the scalar field around the interface.
In particular, Case~3 demonstrates that regularity requirements such as the lower bound $|\nabla\phi|\ge c_0$ may hold only in a narrow neighborhood $\mathcal{T}_h$ of the interface.}
\label{fig:assumption_geom_validation}
\end{figure*}

\begin{figure*}[t]
\centering
\setlength{\tabcolsep}{3pt}
\renewcommand{\arraystretch}{1.0}

% ---------------- Row 1: success masks ----------------
\begin{tabular}{ccc}
\includegraphics[width=0.28\textwidth]{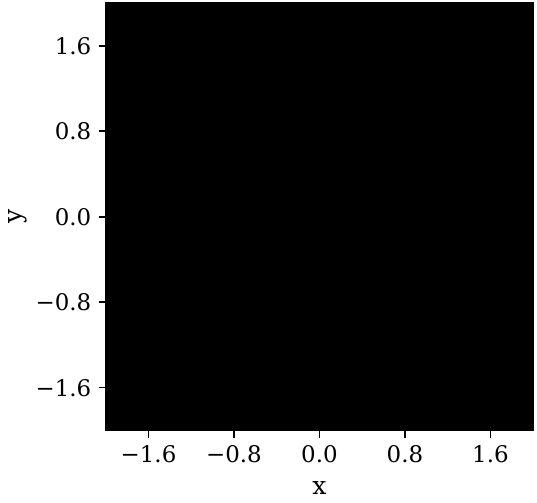} &
\includegraphics[width=0.28\textwidth]{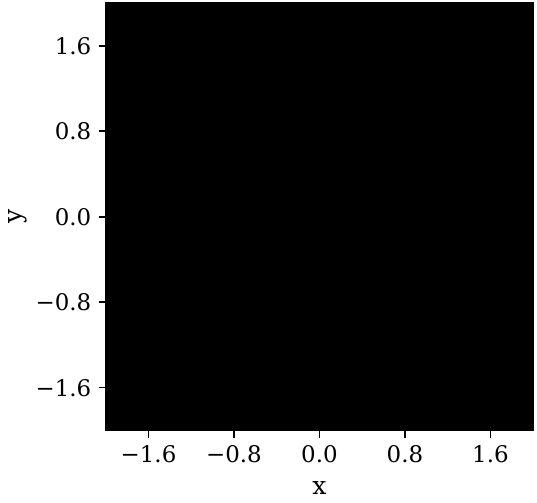} &
\includegraphics[width=0.28\textwidth]{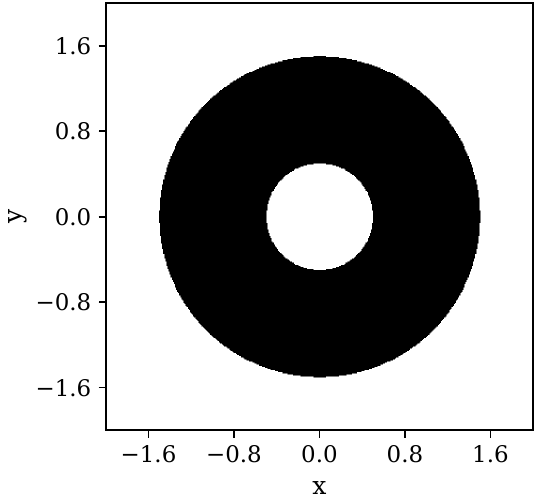} \\
\end{tabular}

\vspace{0.15em}
\includegraphics[width=0.3\textwidth]{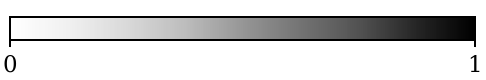}

\vspace{0.35em}

% ---------------- Row 2: error maps ----------------
\begin{tabular}{ccc}
\includegraphics[width=0.28\textwidth]{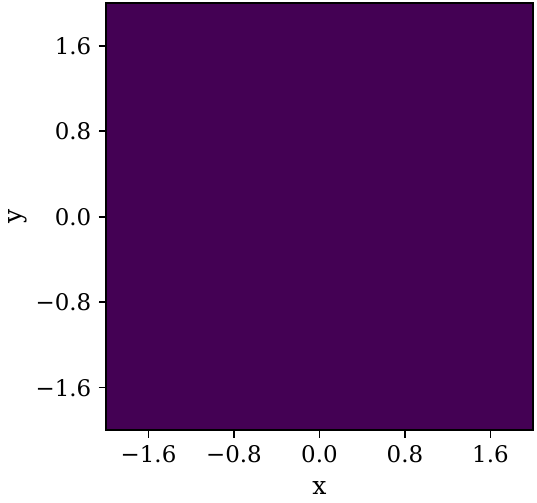} &
\includegraphics[width=0.28\textwidth]{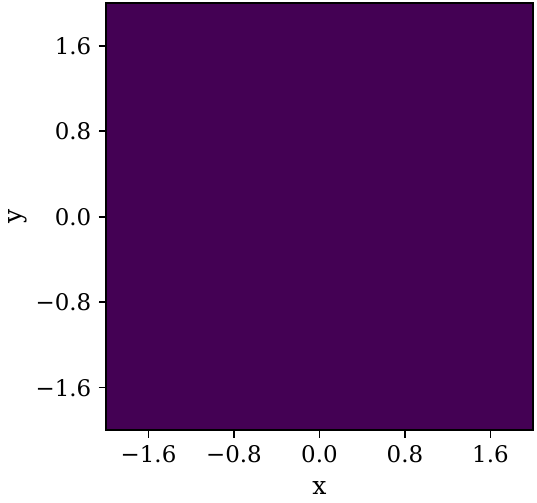} &
\includegraphics[width=0.28\textwidth]{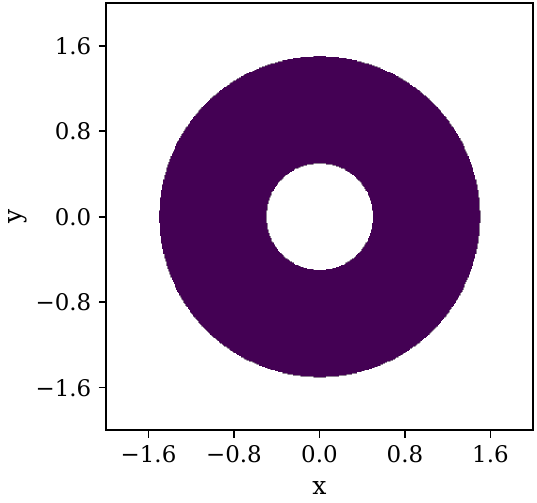} \\
\end{tabular}

\vspace{0.15em}
\includegraphics[width=0.3\textwidth]{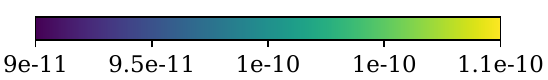}

\vspace{-1mm}
\caption{\textbf{Numerical validation of \lemref{lem:projection} via Newton normal projection.}
Top: success masks (convergence to the true closest point).
Bottom: projection error $\|\mathbf{x}_{\mathrm{final}}-\mathbf{x}^*(\mathbf{x})\|$ on the successful set.
Case~1 (Eikonal SDF) and Case~2 (non-eikonal quadratic) converge robustly in a wide neighborhood of $\Gamma$,
while Case~3 succeeds only inside the prescribed tubular neighborhood $\mathcal{T}_h$.}
\label{fig:projection_validation}
\end{figure*}

Before turning to INR-defined geometries, we first illustrate the role of
\assmref{ass:geom} and \lemref{lem:projection} on three analytic level-set
constructions in two dimensions.
\figref{fig:assumption_geom_validation} compares: (i) an Eikonal signed distance field
$\phi=d$, (ii) a non-Eikonal quadratic implicit field $\phi=r^2-R^2$ sharing the same zero
level set, and (iii) a \emph{banded} level-set field constructed to satisfy
$|\nabla\phi|\ge c_0$ only in a prescribed tubular neighborhood $\mathcal{T}_h$ around the
interface.
The gradient magnitude plots explicitly highlight where the non-degeneracy condition
$|\nabla\phi|\ge c_0$ holds and where it fails, while the surface visualizations
$z=\phi(x,y)$ provide an intuitive view of the scalar-field landscape near the zero level
set.

We next validate the well-posedness of the normal projection map numerically using the
standard Newton closest-point iteration,
\[
\mathbf{x}_{k+1}
=
\mathbf{x}_k
-
\frac{\phi(\mathbf{x}_k)}{\|\nabla\phi(\mathbf{x}_k)\|^2}\,\nabla\phi(\mathbf{x}_k),
\]
which coincides with Newton’s method for solving $\phi(\mathbf{x})=0$ along the normal
direction.
~\figref{fig:projection_validation} reports (top) the set of initial points for which
the iteration converges to the \emph{true} closest point on $\Gamma=\{\phi=0\}$ and
(bottom) the corresponding projection error.
For Case~1 and Case~2, the iteration converges robustly in a wide neighborhood of the
interface, consistent with $\nabla\phi$ being non-vanishing near $\Gamma$.
In contrast, Case~3 exhibits reliable convergence only within the designed band
$\mathcal{T}_h$, illustrating that projection-based constructions are stable only when the
analysis is restricted to $h<h_{\mathrm{crit}}$ and the gradient non-degeneracy condition
in \assmref{ass:geom} is enforced.

\subsection{Numerical validation of the Hausdorff perturbation bound}
\begin{figure*}[t]
\centering
\setlength{\tabcolsep}{3pt}
\renewcommand{\arraystretch}{1.0}

% -------- Row 1: |\nabla \psi_\theta| --------
\begin{tabular}{cccc}
\includegraphics[width=0.235\textwidth]{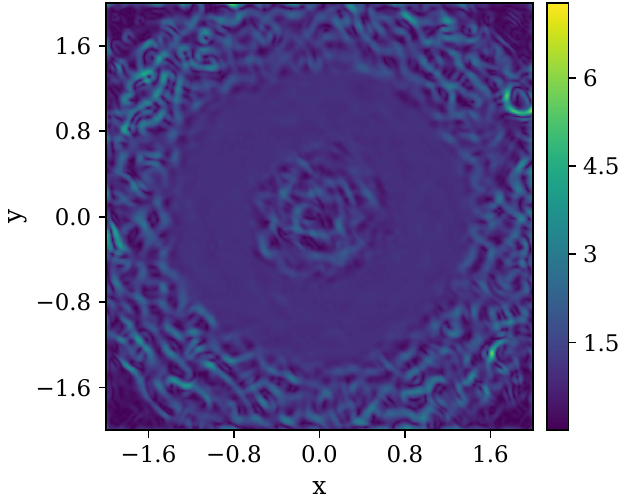} &
\includegraphics[width=0.235\textwidth]{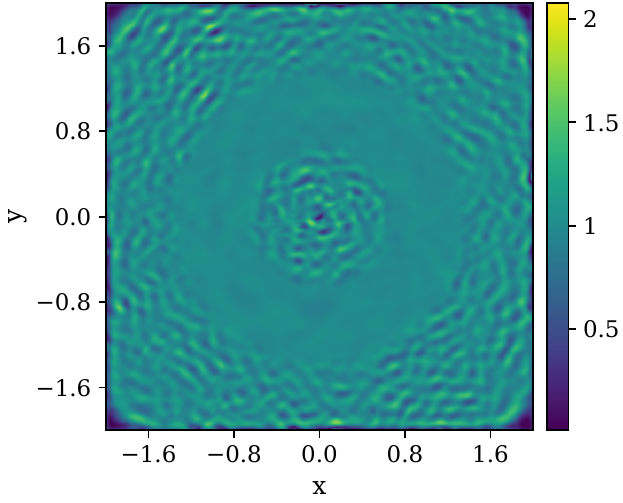} &
\includegraphics[width=0.235\textwidth]{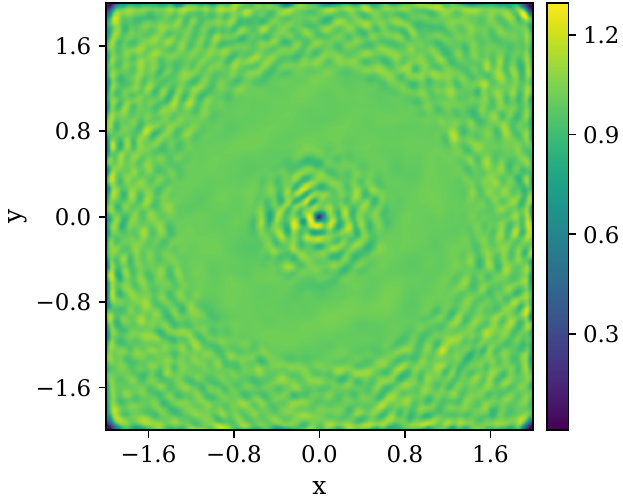} &
\includegraphics[width=0.235\textwidth]{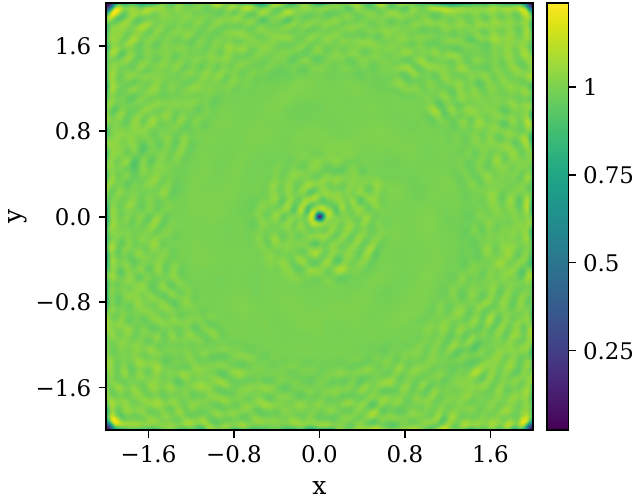} \\
\end{tabular}

\vspace{0.25em}

% -------- Row 2: 3D surface z = psi_theta(x,y) --------
\begin{tabular}{cccc}
\includegraphics[width=0.235\textwidth]{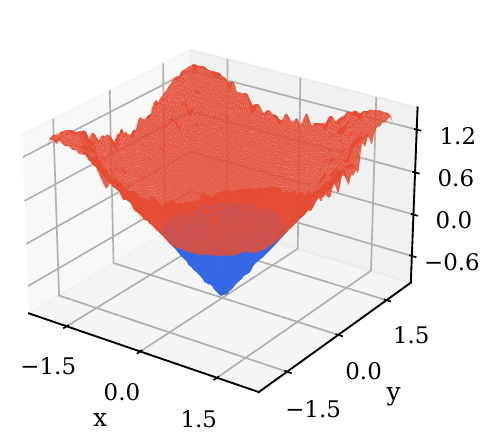} &
\includegraphics[width=0.235\textwidth]{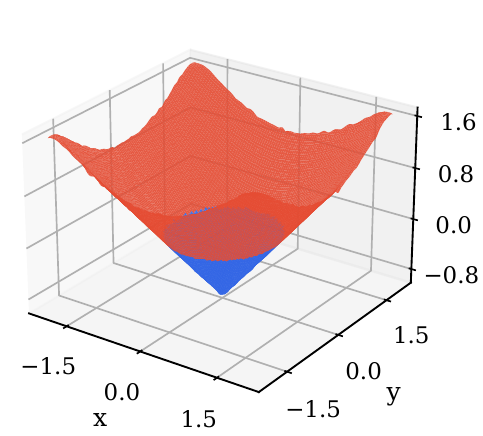} &
\includegraphics[width=0.235\textwidth]{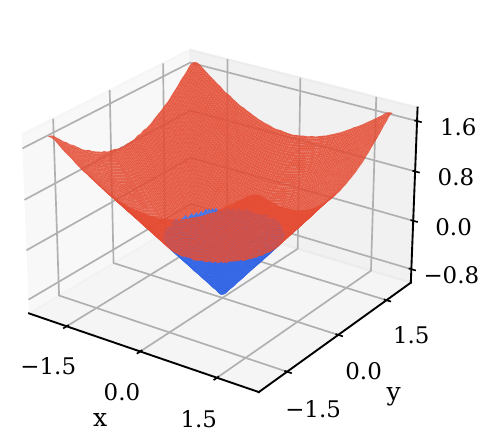} &
\includegraphics[width=0.235\textwidth]{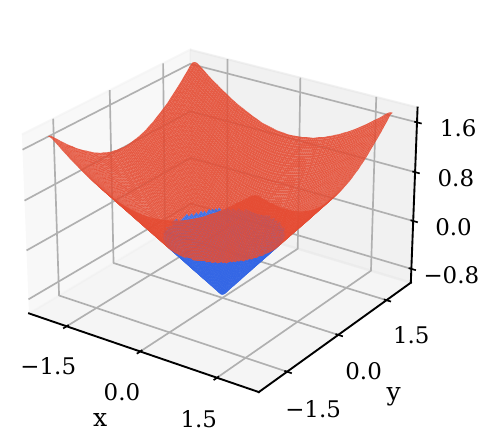} \\
\end{tabular}

\vspace{-1mm}
\caption{\textbf{INR fields trained with different budgets.}
Columns correspond to training steps/epochs: $10^3$, $3{\times}10^3$, $10^4$, $1.5{\times}10^4$.
Top: gradient magnitude $|\nabla\psi_\theta|$ on a uniform grid, indicating improved
near-interface non-degeneracy as training increases.
Bottom: the learned surface $z=\psi_\theta(x,y)$, showing reduced oscillations and a more
regular level-set landscape for longer training.}
\label{fig:inr_fields_epochs}
\end{figure*}

\begin{figure*}[t]
\centering
\setlength{\tabcolsep}{3pt}
\renewcommand{\arraystretch}{1.0}

% -------- Row 1: success masks --------
\begin{tabular}{cccc}
\includegraphics[width=0.235\textwidth]{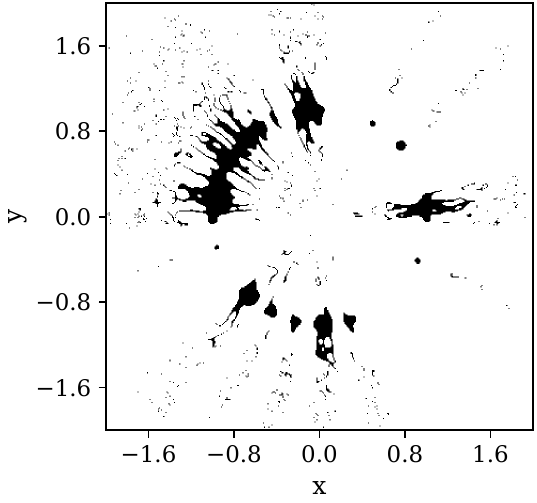} &
\includegraphics[width=0.235\textwidth]{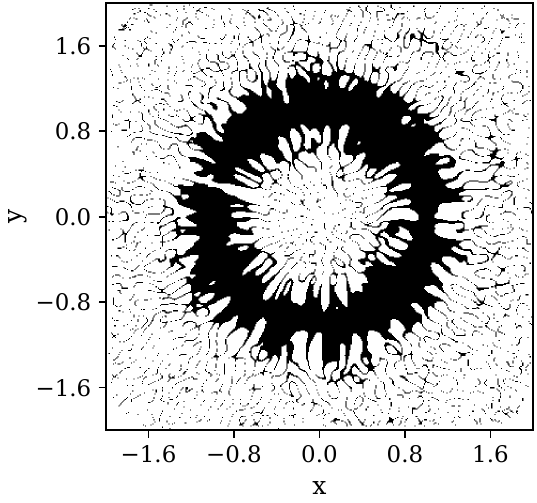} &
\includegraphics[width=0.235\textwidth]{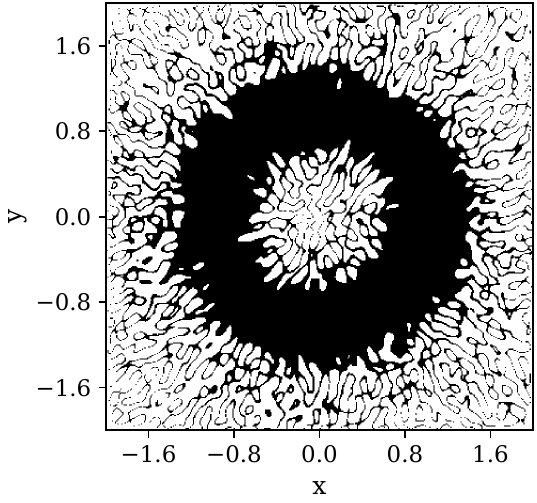} &
\includegraphics[width=0.235\textwidth]{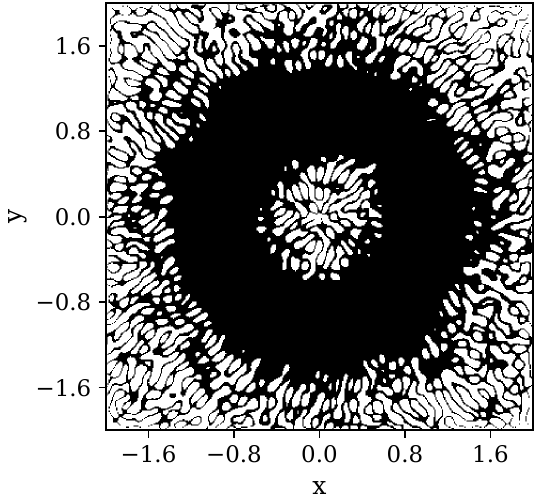} \\
\end{tabular}

\vspace{0.25em}

% shared success colorbar
\includegraphics[width=0.3\textwidth]{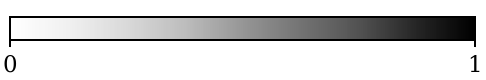}

\vspace{0.6em}

% -------- Row 2: error maps --------
\begin{tabular}{cccc}
\includegraphics[width=0.235\textwidth]{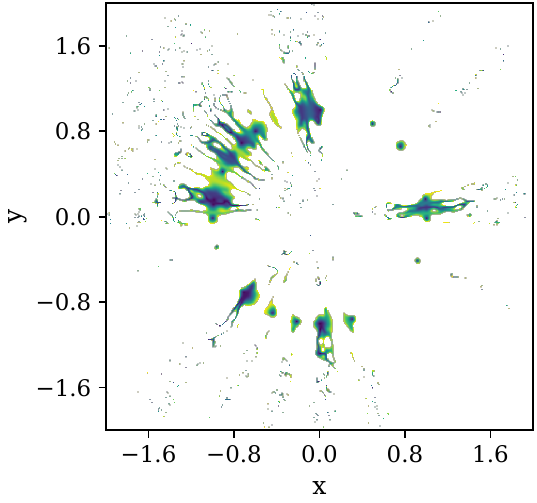} &
\includegraphics[width=0.235\textwidth]{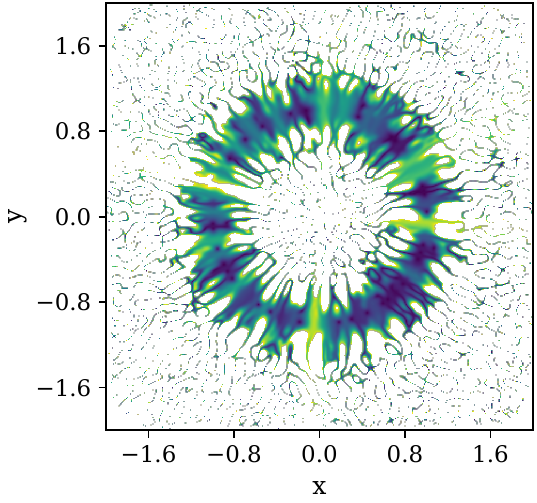} &
\includegraphics[width=0.235\textwidth]{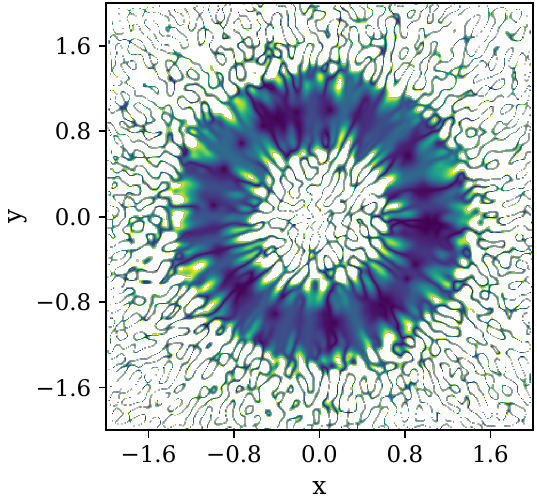} &
\includegraphics[width=0.235\textwidth]{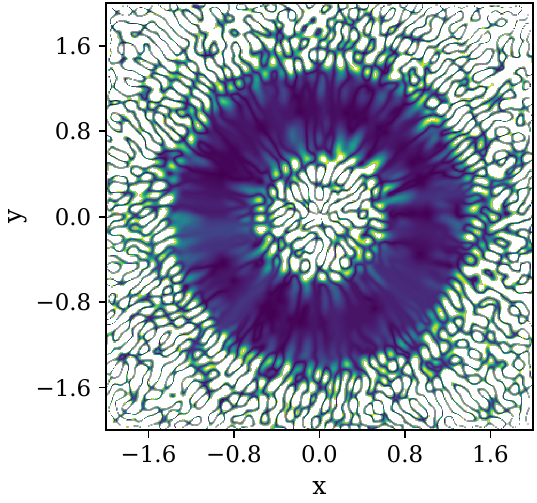} \\
\end{tabular}

\vspace{0.25em}

% shared error colorbar
\includegraphics[width=0.3\textwidth]{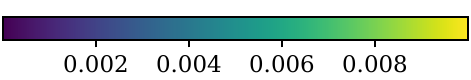}

\vspace{-1mm}
\caption{\textbf{Newton closest-point projection on INR zero level sets across training epochs.}
We apply the Newton map \eqnref{eq:newton_projection_inr} to project grid points onto
$\Gamma^p=\{\psi_\theta=0\}$.
Top: success masks (black indicates convergence to the \emph{true} closest point on the
reference circle, white indicates failure or convergence to an incorrect point).
Bottom: projection error $\|\mathbf{x}_{\mathrm{final}}-\mathbf{x}^*(\mathbf{x}_0)\|$
on the successful set.
Longer training yields larger success regions and smaller errors, consistent with the
role of gradient non-degeneracy in \lemref{lem:projection}.}
\label{fig:inr_projection_epochs}
\end{figure*}

\begin{figure}[t]
\centering
\begin{tikzpicture}
\begin{axis}[
    width=0.5\linewidth,
    height=0.35\linewidth,
    xmode=log,
    ymode=log,
    xlabel={$\,\widehat{\varepsilon}_\infty / \widehat{\tilde c}_0$},
    ylabel={$\,\widehat d_H$},
    grid=both,
    legend style={at={(0.02,0.98)},anchor=north west},
    tick label style={font=\small},
    label style={font=\small},
]

% ---- Data points: (bound, dH_hat) ----
\addplot[
    only marks,
    mark=*,
    mark size=2.2pt
] coordinates {
    (1.0279e-01, 4.7688e-02) % 1k
    (1.3126e-02, 1.0506e-02) % 3k
    (6.2101e-03, 3.4804e-03) % 10k
    (4.2956e-03, 2.0571e-03) % 15k
};
\addlegendentry{Trained INRs}

% ---- Reference line y = x ----
\addplot[
    domain=1e-3:2e-1,
    samples=200,
    thick
] {x};
\addlegendentry{Lemma's Limit}

\node[anchor=south west,font=\scriptsize] at (axis cs:1.0279e-01,4.7688e-02) {1k};
\node[anchor=south west,font=\scriptsize] at (axis cs:1.3126e-02,1.0506e-02) {3k};
\node[anchor=south west,font=\scriptsize] at (axis cs:6.2101e-03,3.4804e-03) {10k};
\node[anchor=south west,font=\scriptsize] at (axis cs:4.2956e-03,2.0571e-03) {15k};

\end{axis}
\end{tikzpicture}
\caption{Empirical validation of \lemref{lem:hausdorff}: the measured Hausdorff
distance \(\widehat d_H\) stays below the predicted upper bound
\(\widehat{\varepsilon}_\infty/\widehat{\tilde c}_0\) (line \(y=x\)) across INRs
trained for different epochs.}
\label{fig:hausdorff_bound_validation}
\end{figure}
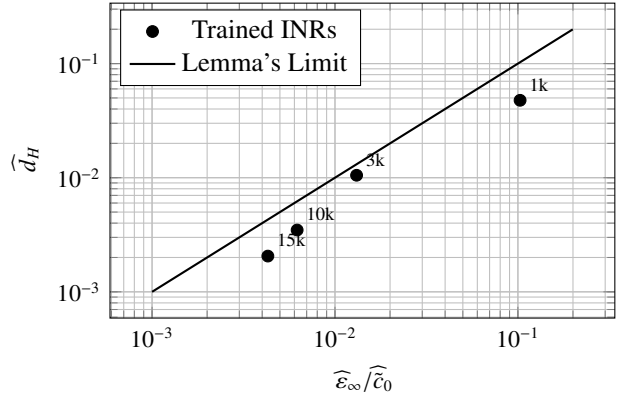

\label{sec:hausdorff_validation}

We next empirically validate the Hausdorff perturbation estimate of
\lemref{lem:hausdorff} on a controlled two-dimensional test case.
We consider the unit-circle signed distance function (SDF)
\begin{equation}
\label{eq:true_sdf_circle}
\phi(x)=\|x\|-1,
\end{equation}
whose zero level set $\Gamma=\{\phi=0\}$ is the unit circle.
We train a SIREN-based implicit neural representation (INR) $\psi_\theta$ to approximate
$\phi$ over a fixed bounding box, using a data term together with an Eikonal penalty in
a narrow band around the interface. We repeat training using four different computational
budgets (increasing number of optimization steps), which we interpret as increasingly
accurate geometric surrogates.

\paragraph{Regularity trends across training budgets.}
~\figref{fig:inr_fields_epochs} visualizes the learned implicit fields for the four
training budgets. The top row shows the gradient magnitude $|\nabla\psi_\theta|$,
while the bottom row shows the surface plot $z=\psi_\theta(x,y)$.
As training proceeds, the level-set landscape becomes smoother and less oscillatory, and
the near-interface gradient magnitude becomes increasingly well-behaved and closer to the
SDF ideal $|\nabla\phi|\equiv 1$.
These observations are consistent with \assmref{ass:geom}, which requires both
(i) non-degeneracy of $\nabla\psi_\theta$ and (ii) bounded curvature surrogates through
a $C^{1,1}$-type control in a tubular neighborhood of $\Gamma^p$.

\paragraph{Projection stability of INR interfaces.}
To further connect these regularity improvements to geometric well-posedness, we apply
the Newton normal-projection iteration
\begin{equation}
\label{eq:newton_projection_inr}
\mathbf{x}_{k+1}
=
\mathbf{x}_k
-
\frac{\psi_\theta(\mathbf{x}_k)}{\|\nabla\psi_\theta(\mathbf{x}_k)\|^2}\,
\nabla\psi_\theta(\mathbf{x}_k),
\end{equation}
which is a standard closest-point construction for implicit surfaces.
~\figref{fig:inr_projection_epochs} reports, for each trained INR, (top) the set of
initial points for which Newton iteration converges to the \emph{true} closest point on
the reference circle and (bottom) the resulting closest-point error on the converged set.
Longer training yields both larger convergence regions and smaller projection errors.
This directly supports the mechanism behind \lemref{lem:projection}: stable normal
projection is sensitive to gradient non-degeneracy, and deteriorates when
$\|\nabla\psi_\theta\|$ becomes small or irregular near the interface.

\paragraph{Empirical verification of the Hausdorff bound.}
We now validate the quantitative estimate in \lemref{lem:hausdorff}.
Let $\Gamma^p=\{\psi_\theta=0\}$ denote the INR-induced boundary.
We evaluate errors in a tubular neighborhood around the true interface,
\[
\mathcal{T}_h=\{x:|\phi(x)|<h\},
\]
and define the empirical uniform mismatch and gradient non-degeneracy as
\begin{equation}
\label{eq:empirical_eps_c0}
\widehat{\varepsilon}_\infty
:=
\max_{x\in\mathcal{T}_h}
|\psi_\theta(x)-\phi(x)|,
\qquad
\widehat{\tilde c}_0
:=
\min_{x\in\mathcal{T}_h}
\|\nabla\psi_\theta(x)\|.
\end{equation}
Since the reference field $\phi$ is an exact SDF, we have $c_0=1$ for the true geometry.
Motivated by \lemref{lem:hausdorff}, we compare the measured Hausdorff discrepancy
$\widehat d_H(\Gamma,\Gamma^p)$ to the predicted upper bound
\begin{equation}
\label{eq:empirical_hausdorff_bound}
\widehat{\mathrm{Bound}}
:=
\frac{\widehat{\varepsilon}_\infty}{\min(1,\widehat{\tilde c}_0)}.
\end{equation}
~\figref{fig:hausdorff_bound_validation} plots $\widehat d_H$ versus
$\widehat{\varepsilon}_\infty/\widehat{\tilde c}_0$ in log--log scale, together with the
reference line $y=x$ corresponding to the bound predicted by \lemref{lem:hausdorff}.
All measured points lie below the line $y=x$, confirming that the empirical behavior of
the trained INRs is consistent with the theoretical estimate.

\paragraph{Interpretation}
The results highlight two key phenomena.
First, the Hausdorff discrepancy decreases systematically as training increases, tracking
the reduction in the uniform mismatch $\widehat{\varepsilon}_\infty$.
Second, the bound exhibits explicit dependence on $\widehat{\tilde c}_0$, emphasizing the
importance of gradient non-degeneracy: even small function mismatch may yield large
geometric error if $\|\nabla\psi_\theta\|$ degenerates near the interface. In our
experiment, $\widehat{\tilde c}_0$ approaches the ideal SDF value $1$ for larger training
budgets, and the predicted upper bound becomes increasingly sharp.
Overall, these results provide concrete evidence that the geometric perturbation estimate
of \lemref{lem:hausdorff} captures the dominant scaling between INR training error and
interface discrepancy for practical neural level-set reconstructions.
\subsection{Empirical Error Scaling under Geometric Perturbations}
\label{sec:error_bounds}
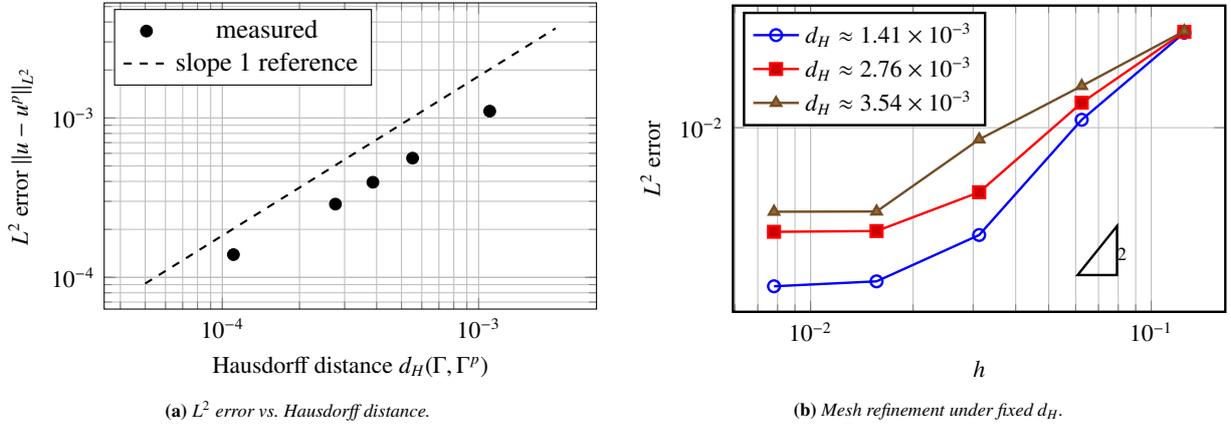
\begin{figure*}[t]
\centering

\begin{subfigure}[t]{0.49\textwidth}
\centering
\begin{tikzpicture}
\begin{axis}[
    width=\linewidth,
    height=0.70\linewidth,
    xmode=log,
    ymode=log,
    grid=both,
    xlabel={Hausdorff distance $d_H(\Gamma,\Gamma^p)$},
    ylabel={$L^2$ error $\|u-u^p\|_{L^2}$},
    legend style={at={(0.02,0.98)},anchor=north west},
    tick label style={font=\small},
    label style={font=\small},
]

% ---- data ----
\addplot[
    only marks,
    mark=*,
    mark size=2.2pt
] coordinates {
    (1.10485e-04, 1.38921e-04)
    (2.76213e-04, 2.88161e-04)
    (3.86698e-04, 3.95160e-04)
    (5.52426e-04, 5.59827e-04)
    (1.10485e-03, 1.10331e-03)
};
\addlegendentry{measured}

% ---- reference slope +1: y = C x ----
\addplot[
    domain=5e-5:2e-3,
    samples=200,
    thick,
    dashed
] { (1.01032e-04/5.52426e-05) * x };
\addlegendentry{slope $1$ reference}

\end{axis}
\end{tikzpicture}
\caption{$L^2$ error vs.\ Hausdorff distance.}
\label{fig:l2_vs_hausdorff_side}
\end{subfigure}
\hfill
\begin{subfigure}[t]{0.49\textwidth}
\centering
\begin{tikzpicture}
\begin{axis}[
    width=\linewidth,
    height=0.70\linewidth,
    xmode=log,
    ymode=log,
    xlabel={$h$},
    ylabel={$L^2\text{ error}$},
    grid=both,
    legend style={at={(0.02,0.98)},anchor=north west,font=\small},
    tick label style={font=\small},
    label style={font=\small},
    line width=0.9pt,
    mark size=2.2pt,
]

% ---------------- data table ----------------
\pgfplotstableread[col sep=comma]{
alpha_dir,alpha,level,h,L2_error,Hausdorff,Hausdorff_mean
alpha0.001953125,0.001953125,8,0.0078125,0.00278998,0.00276213,0.002761758
alpha0.001953125,0.001953125,7,0.015625,0.0028157,0.00276212,0.002761758
alpha0.001953125,0.001953125,6,0.03125,0.00453095,0.00276206,0.002761758
alpha0.001953125,0.001953125,5,0.0625,0.0135777,0.00276179,0.002761758
alpha0.001953125,0.001953125,4,0.125,0.0323833,0.00276069,0.002761758
alpha0.0025,0.0025,8,0.0078125,0.00356916,0.00353553,0.0035350599999999996
alpha0.0025,0.0025,7,0.015625,0.00357819,0.00353552,0.0035350599999999996
alpha0.0025,0.0025,6,0.03125,0.00865869,0.00353544,0.0035350599999999996
alpha0.0025,0.0025,5,0.0625,0.0166679,0.0035351,0.0035350599999999996
alpha0.0025,0.0025,4,0.125,0.0326803,0.00353371,0.0035350599999999996
alpha0.001,0.001,8,0.0078125,0.00143048,0.00141421,0.0014140140000000001
alpha0.001,0.001,7,0.015625,0.00151974,0.00141421,0.0014140140000000001
alpha0.001,0.001,6,0.03125,0.00268244,0.00141417,0.0014140140000000001
alpha0.001,0.001,5,0.0625,0.0110179,0.00141403,0.0014140140000000001
alpha0.001,0.001,4,0.125,0.0320447,0.00141345,0.0014140140000000001
}\datatable

% ---------------- curves (one per Hausdorff distance) ----------------
\addplot+[mark=o]
table[
    x=h,
    y=L2_error,
    col sep=comma,
    restrict expr to domain={\thisrow{alpha}}{0.001:0.001}
] {\datatable};
\addlegendentry{$d_H \approx 1.41\times10^{-3}$}

\addplot+[mark=square*]
table[
    x=h,
    y=L2_error,
    col sep=comma,
    restrict expr to domain={\thisrow{alpha}}{0.001953125:0.001953125}
] {\datatable};
\addlegendentry{$d_H \approx 2.76\times10^{-3}$}

\addplot+[mark=triangle*]
table[
    x=h,
    y=L2_error,
    col sep=comma,
    restrict expr to domain={\thisrow{alpha}}{0.0025:0.0025}
] {\datatable};
\addlegendentry{$d_H \approx 3.54\times10^{-3}$}

% ---------------- slope indicator triangles ----------------
% \coordinate (S1A) at (rel axis cs:0.70,0.25);
% \coordinate (S1B) at (rel axis cs:0.78,0.25);
% \coordinate (S1C) at (rel axis cs:0.78,0.33);
% \draw[black] (S1A) -- (S1B) -- (S1C) -- cycle;
% \node[font=\scriptsize] at (rel axis cs:0.79,0.29) {$1$};

\coordinate (S2A) at (rel axis cs:0.70,0.12);
\coordinate (S2B) at (rel axis cs:0.78,0.12);
\coordinate (S2C) at (rel axis cs:0.78,0.28);
\draw[black] (S2A) -- (S2B) -- (S2C) -- cycle;
\node[font=\scriptsize] at (rel axis cs:0.79,0.19) {$2$};

\end{axis}
\end{tikzpicture}
\caption{Mesh refinement under fixed $d_H$.}
\label{fig:l2_vs_h_multidh_side}
\end{subfigure}
    \vspace{-1mm}
    \caption{\textbf{Solution error under geometric perturbations using an octree background grid with SBM boundary treatment.}
    Left: $L^2$ error as a function of the Hausdorff boundary mismatch.
    Right: $L^2$ convergence with mesh refinement for several fixed perturbation magnitudes.}
    \label{fig:error_bounds_side_by_side}
\end{figure*}
We now empirically examine how geometric boundary perturbations influence the numerical solution of an elliptic boundary value problem in an INR-based unfitted discretization pipeline. Classical domain-perturbation theory for elliptic problems posed on Lipschitz domains predicts that, for sufficiently small perturbations, the $L^2$ solution error scales approximately linearly with the Hausdorff boundary mismatch, while discretization error decays algebraically with mesh refinement until a geometry-induced floor is reached. Our objective in this section is to illustrate how this scaling behavior manifests in practice when (i) geometry is represented implicitly by neural fields and (ii) boundary conditions are imposed weakly using the shifted boundary method (SBM) on an unfitted background grid.

Throughout this section, the reference domain $\Omega$ is the unit disk with boundary $\Gamma=\partial\Omega$. We generate a family of perturbed domains $\Omega^p$ whose boundaries $\Gamma^p$ deviate from $\Gamma$ in a controlled manner. The geometric discrepancy is quantified using the Hausdorff distance $d_H(\Gamma,\Gamma^p)$.

\paragraph{Model problem and manufactured solution.}
We consider the Poisson problem
\begin{equation}
\label{eq:poisson_model}
-\Delta u = f \quad \text{in } \Omega, 
\qquad
u = g \quad \text{on } \Gamma .
\end{equation}
To enable direct error evaluation, we employ the manufactured solution
\begin{equation}
\label{eq:manufactured_solution}
u(x,y)=\sin(\pi x),
\end{equation}
which implies the forcing term
\begin{equation}
\label{eq:forcing_term}
f(x,y) = -\Delta u(x,y) = \pi^2 \sin(\pi x),
\end{equation}
and boundary data $g=u|_{\Gamma}$. For each perturbed geometry $\Omega^p$, we solve
the corresponding perturbed-domain problem
\begin{equation}
\label{eq:poisson_perturbed}
-\Delta u^p = f \quad \text{in } \Omega^p, 
\qquad
u^p = g^p \quad \text{on } \Gamma^p ,
\end{equation}
where the Dirichlet data is taken consistently from the manufactured solution, i.e., $g^p = u|_{\Gamma^p}$. All computations are performed using an unfitted discretization on octree-based background grids, where Dirichlet boundary conditions on $\Gamma^p$ are imposed weakly using the shifted boundary method (SBM).

Since $u$ and $u^p$ are defined on different domains, we compare them on a fixed background domain $D\supset \Omega\cup\Omega^p$ by extension and report
\[
\|u-u^p\|_{L^2(D)} .
\]
This provides a consistent measure across perturbed geometries and isolates the effect of geometric mismatch from domain-dependent normalization.

\subsubsection{Solution error versus Hausdorff perturbation}
\label{sec:l2_vs_hausdorff}

~\figref{fig:error_bounds_side_by_side}(\textit{left}) reports the measured $L^2$ error $\|u-u^p\|_{L^2(D)}$ as a function of the Hausdorff boundary mismatch $d_H(\Gamma,\Gamma^p)$ in log--log scale. The dashed line indicates the linear
scaling predicted by classical perturbation estimates. The measured data exhibits an approximately affine trend with slope close to $+1$, indicating the empirical scaling
\begin{equation}
\label{eq:l2_scaling_observed}
\|u-u^p\|_{L^2(D)} \;\approx\; C\, d_H(\Gamma,\Gamma^p),
\end{equation}
for the perturbation magnitudes considered here. This observation is consistent with theoretical expectations for elliptic problems on perturbed Lipschitz domains and suggests that INR-defined surrogate geometries, when used with SBM, inherit the same first-order sensitivity to boundary mismatch as classical fitted discretizations.

\subsection{Interplay of discretization and geometry error}
\label{sec:l2_vs_h}

We next investigate how mesh refinement interacts with fixed geometric perturbations. For each selected perturbation level (equivalently, for each fixed Hausdorff distance $d_H(\Gamma,\Gamma^p)$), we solve \eqnref{eq:poisson_perturbed} on a sequence of
uniformly refined octree grids with characteristic mesh size $h$. \figref{fig:error_bounds_side_by_side}(\textit{right}) shows the convergence of $\|u-u^p\|_{L^2(D)}$ with decreasing $h$ for three representative perturbation magnitudes. For coarse grids, the total error decreases under refinement, reflecting the reduction in discretization error. However, once $h$ becomes sufficiently small, the curves plateau and no longer improve with refinement, indicating the onset of a geometry-limited regime. This behavior is consistent with an additive error decomposition of the form
\[
\text{total error}
\;\approx\;
\underbrace{\text{geometry-induced error}}_{\text{set by } d_H(\Gamma,\Gamma^p)}
\;+\;
\underbrace{\text{discretization error}}_{\to 0\ \text{as}\ h\to 0}.
\]
The slope markers in the pre-asymptotic regime indicate algebraic decay with mesh refinement, while the plateau confirms that boundary mismatch acts as an irreducible error floor unless $d_H(\Gamma,\Gamma^p)$ is reduced in tandem with the discretization scale. 

Taken together, the two plots in ~\figref{fig:error_bounds_side_by_side} demonstrate that accurate elliptic solutions on INR-defined surrogate geometries require simultaneous control of both the mesh resolution $h$ and the geometric discrepancy $d_H(\Gamma,\Gamma^p)$. In the context of  implicit neural representations, this highlights the practical importance of aligning training tolerances with the intended discretization accuracy in order to avoid geometry-dominated saturation.

\section{Conclusion}
\label{sec:conclusion}

We have developed a unified theoretical framework that links the approximation quality
of Implicit Neural Representations (INRs) to the stability and accuracy of elliptic PDE
solutions posed on INR-defined domains. While INRs provide a compact, continuous, and
differentiable representation of geometry, our results show that reconstruction-oriented
metrics alone (e.g., Chamfer or Hausdorff distance between extracted surfaces) do not
fully capture whether a learned field is suitable for scientific simulation. In
particular, simulation readiness depends critically on the \emph{differential
regularity} of the implicit field in a neighborhood of the interface.

Our main contributions can be summarized as follows.
\begin{enumerate}
\item \textbf{Minimal geometric regularity for well-posed projection.}
We identified sufficient regularity conditions on the implicit field in a tubular neighborhood of the interface that guarantee existence and uniqueness of the normal projection map and prevent pathological geometric behavior. Concretely, we require
$\psi\in C^{1,1}$ locally, together with a uniform gradient non-degeneracy bound
$\|\nabla\psi\|\ge c_0>0$ and bounded curvature surrogate $\|\nabla^2\psi\|_{\op}\le C_\psi$.
These constants appear explicitly in the stability estimates, clarifying how
vanishing gradients can amplify geometric errors and destabilize projection-based
constructions.

\item \textbf{Training error $\Rightarrow$ Hausdorff boundary perturbation.}
Assuming a uniform approximation tolerance $\|\psi-\phi\|_{L^\infty(\mathcal{T}_h)}
\le \varepsilon_\infty$, we derived bounds relating the learned boundary
$\Gamma^p=\{\psi=0\}$ to the reference boundary $\Gamma=\{\phi=0\}$ through
Hausdorff distance estimates of the form
$d_H(\Gamma,\Gamma^p)\lesssim \varepsilon_\infty/c_0$,
making explicit the role of the gradient lower bound.

\item \textbf{Geometry-to-solution stability and training guidelines.}
Combining geometric perturbation estimates with sharp stability results for elliptic
boundary value problems, we obtained explicit a priori bounds showing how geometric
perturbations propagate into PDE solution error. Together with standard finite element
approximation theory, this yields practical accuracy requirements for INR training:
to avoid geometry-dominated saturation under mesh refinement, the INR accuracy must be
tightened commensurately with the target discretization scale. In particular, for
first-order methods this leads to a simple rule-of-thumb: the geometry error must decay
faster than the discretization error in order to preserve asymptotic convergence.
\end{enumerate}

\paragraph{Future directions}
Several extensions are natural. First, the explicit separation between geometric and
discretization error motivates adaptive strategies that couple INR training tolerance
to discretization resolution and error indicators, avoiding both under-training (which
limits accuracy) and over-training (which wastes computation). Second, extending the
analysis to higher-order discretizations and higher-regularity PDE settings will
require sharper control of geometry through stronger smoothness assumptions on the
implicit field. Finally, generalizing these stability mechanisms to time-dependent
domains and moving boundary problems would provide a theoretical foundation for INR-based
simulation pipelines in fluid--structure interaction and free-boundary dynamics.

Overall, our results provide a rigorous framework for using INRs as a reliable geometric primitive in high-fidelity scientific computing, clarifying which regularity and training requirements are necessary for stable, convergent PDE simulation.

\bibliographystyle{elsarticle-harv}
\bibliography{main}

@article{savare2002domain,
  title={Domain perturbations and estimates for the solutions of second order elliptic equations},
  author={Savar{\'e}, Giuseppe and Schimperna, Giulio},
  journal={Journal de math{\'e}matiques pures et appliqu{\'e}es},
  volume={81},
  number={11},
  pages={1071--1112},
  year={2002},
  publisher={Elsevier}
}

@article{jerison1995,
  author = {Jerison, David and Kenig, Carlos E.},
  title = {The inhomogeneous {D}irichlet problem in {L}ipschitz domains},
  journal = {Journal of Functional Analysis},
  volume = {130},
  number = {1},
  pages = {161--219},
  year = {1995},
  doi = {10.1006/jfan.1995.1067}
}

@article{grisvard1985elliptic,
  title={"Elliptic Problems in Nonsmooth Domains"},
  author={Grisvard, P},
  journal={Pitman},
  year={1985}
}

@article{henrot1994continuity,
  title={Continuity with respect to the domain for the Laplacian: a survey},
  author={Henrot, Antoine},
  journal={Control and Cybernetics},
  volume={23},
  number={3},
  pages={427--443},
  year={1994}
}

@article{ciarlet1972interpolation,
  title={Interpolation theory over curved elements, with applications to finite element methods},
  author={Ciarlet, Philippe G and Raviart, P-A},
  journal={Computer Methods in Applied Mechanics and Engineering},
  volume={1},
  number={2},
  pages={217--249},
  year={1972},
  publisher={Elsevier}
}

@inproceedings{sitzmann2020implicit,
 author = {Sitzmann, Vincent and Martel, Julien and Bergman, Alexander and Lindell, David and Wetzstein, Gordon},
 booktitle = {Advances in Neural Information Processing Systems},
 editor = {H. Larochelle and M. Ranzato and R. Hadsell and M.F. Balcan and H. Lin},
 pages = {7462--7473},
 publisher = {Curran Associates, Inc.},
 title = {Implicit Neural Representations with Periodic Activation Functions},
 url = {https://proceedings.neurips.cc/paper_files/paper/2020/file/53c04118df112c13a8c34b38343b9c10-Paper.pdf},
 volume = {33},
 year = {2020},
address = {virtual}
}

@inproceedings{park2019deepsdf,
    author = {Park, Jeong Joon and Florence, Peter and Straub, Julian and Newcombe, Richard and Lovegrove, Steven},
    title = {{DeepSDF}: Learning Continuous Signed Distance Functions for Shape Representation},
    booktitle = {Proceedings of the IEEE/CVF Conference on Computer Vision and Pattern Recognition (CVPR)},
    month = {June},
    year = {2019},
    publisher = {IEEE},
    address = {Long Beach},
    pages="1--12",
}

@inproceedings{chibane2020implicit,
  title={Implicit functions in feature space for {3D} shape reconstruction and completion},
  author={Chibane, Julian and Alldieck, Thiemo and Pons-Moll, Gerard},
  booktitle = {Proceedings of the IEEE/CVF Conference on Computer Vision and Pattern Recognition (CVPR)},
  pages={6970--6981},
  year={2020},
publisher = {IEEE},
address = {virtual}
}

@inproceedings{gropp2020implicit,
    author = {Gropp, Amos and Yariv, Lior and Haim, Niv and Atzmon, Matan and Lipman, Yaron},
    title = {Implicit geometric regularization for learning shapes},
    year = {2020},
    publisher = {JMLR.org},
    booktitle = {Proceedings of the 37th International Conference on Machine Learning},
    articleno = {355},
    numpages = {11},
    series = {ICML'20},
    address = {virtual},
    pages = {1--12}
}

@Article{hsu2016direct,
  author  = {Hsu, Ming-Chen and Wang, Chenglong and Xu, Fei and Herrema, Austin J and Krishnamurthy, Adarsh},
  journal = {Computer Aided Geometric Design},
  title   = {Direct immersogeometric fluid flow analysis using {B-rep} {CAD} models},
  year    = {2016},
  pages   = {143--158},
  volume  = {43},
}

@Article{main2018shifted1,
  author  = {Main, Alex and Scovazzi, Guglielmo},
  journal = {Journal of Computational Physics},
  title   = {The shifted boundary method for embedded domain computations. {Part I}: {Poisson} and {Stokes} problems},
  year    = {2018},
  pages   = {972--995},
  volume  = {372},
}

@Article{main2018shifted2,
  author  = {Main, Alex and Scovazzi, Guglielmo},
  journal = {Journal of Computational Physics},
  title   = {The shifted boundary method for embedded domain computations. {Part II}: {Linear} advection--diffusion and incompressible {Navier--Stokes} equations},
  year    = {2018},
  pages   = {996--1026},
  volume  = {372},
}

@article{JAISWAL2024117426,
title = {Mesh-driven resampling and regularization for robust point cloud-based flow analysis directly on scanned objects},
journal = {Computer Methods in Applied Mechanics and Engineering},
volume = {432},
pages = {117426},
year = {2024},
issn = {0045-7825},
doi = {https://doi.org/10.1016/j.cma.2024.117426},
url = {https://www.sciencedirect.com/science/article/pii/S0045782524006819},
author = {Monu Jaiswal and Ashton M. Corpuz and Ming-Chen Hsu}
}

@Article{karki2025direct,
  title={Direct flow simulations with implicit neural representation of complex geometry},
  author={Karki, Samundra and Shadkhah, Mehdi and Yang, Cheng-Hau and Balu, Aditya and Scovazzi, Guglielmo and Krishnamurthy, Adarsh and Ganapathysubramanian, Baskar},
  journal={Computer Methods in Applied Mechanics and Engineering},
  volume={446},
  pages={118248},
  year={2025},
  publisher={Elsevier}
}

@article{karki2025mechanics,
title = {Mechanics simulation with Implicit Neural Representations of complex geometries},
journal = {Computer-Aided Design},
volume = {190},
pages = {103978},
year = {2026},
issn = {0010-4485},
doi = {https://doi.org/10.1016/j.cad.2025.103978},
url = {https://www.sciencedirect.com/science/article/pii/S0010448525001393},
author = {Samundra Karki and Ming-Chen Hsu and Adarsh Krishnamurthy and Baskar Ganapathysubramanian},
}

@Article{nguyen2015isogeometric,
  author  = {Nguyen, Vinh Phu and Anitescu, Cosmin and Bordas, St{\'e}phane PA and Rabczuk, Timon},
  journal = {Mathematics and Computers in Simulation},
  title   = {Isogeometric analysis: An overview and computer implementation aspects},
  year    = {2015},
  pages   = {89--116},
  volume  = {117},
}

@Article{liu2022eighty,
  author  = {Liu, Wing Kam and Li, Shaofan and Park, Harold S},
  journal = {Archives of Computational Methods in Engineering},
  title   = {Eighty years of the finite element method: birth, evolution, and future},
  year    = {2022},
  number  = {6},
  pages   = {4431--4453},
  volume  = {29},
}

@Article{peskin1972flow,
  author  = {Peskin, Charles S},
  journal = {Journal of Computational Physics},
  title   = {Flow patterns around heart valves: {A} numerical method},
  year    = {1972},
  number  = {2},
  pages   = {252--271},
  volume  = {10},
}

@Article{CHIBA1998145,
  author  = {Chiba, N. and Nishigaki, I. and Yamashita, Y. and Takizawa, C. and Fujishiro, K.},
  journal = {Computer Methods in Applied Mechanics and Engineering},
  title   = {A flexible automatic hexahedral mesh generation by boundary-fit method},
  year    = {1998},
  issn    = {0045-7825},
  number  = {1},
  pages   = {145--154},
  volume  = {161},
}

@TechReport{mchenry2008overview,
  author      = {McHenry, Kenton and Bajcsy, Peter},
  institution = {National Center for Supercomputing Applications, University of Illinois at Urbana-Champaign},
  title       = {An overview of {3D} data content, file formats and viewers},
  year        = {2008},
  address     = {1205 W Clark St, Urbana, IL 61801},
  month       = {October},
  note        = {Technical Report, Image Spatial Data Analysis Group},
  number      = {ISDA08-002},
}

@InProceedings{wang2021neus,
  author    = {Wang, Peng and Liu, Lingjie and Liu, Yuan and Theobalt, Christian and Komura, Taku and Wang, Wenping},
  booktitle = {Proceedings of the 35th International Conference on Neural Information Processing Systems},
  title     = {{NeuS}: {Learning} neural implicit surfaces by volume rendering for multi-view reconstruction},
  year      = {2021},
  pages     = {27171--27183},
}

@phdthesis{persson2005mesh,
  title={Mesh generation for implicit geometries},
  author={Persson, Per-Olof},
  year={2005},
  school={Massachusetts Institute of Technology}
}
\clearpage
\appendix
\section{Proof of Existence of Projection}
\label{app:projection}
In this section, we provide a complete proof for \lemref{lem:projection}.
\begin{proof}
\textbf{Step 1: Regularity of $\Gamma$ and the unit normal.}
By Assumption~\ref{ass:geom}, $\psi\in C^{1,1}(\mathcal T_h)$ and
$\|\nabla\psi(x)\|\ge c_0>0$ on $\mathcal T_h$.
Hence the zero level set
\[
\Gamma := \{x\in\mathbb R^n : \psi(x)=0\}
\]
is a $C^{1,1}$ embedded hypersurface. Define the unit normal field
\[
\mathbf n(x) := \frac{\nabla\psi(x)}{\|\nabla\psi(x)\|},\qquad x\in\Gamma.
\]
For $x,y\in\Gamma$, using
$\big\|\frac{a}{\|a\|}-\frac{b}{\|b\|}\big\|\le \frac{2}{\min(\|a\|,\|b\|)}\|a-b\|$
and $\|\nabla\psi\|\ge c_0$ on $\Gamma\subset\mathcal T_h$, we obtain
\[
\|\mathbf n(x)-\mathbf n(y)\|
\le \frac{2}{c_0}\,\|\nabla\psi(x)-\nabla\psi(y)\|
\le \frac{2C_\psi}{c_0}\,\|x-y\|,
\]
since $\nabla\psi$ is Lipschitz on $\mathcal T_h$ with Lipschitz constant bounded by $C_\psi$.
Thus $\mathbf n$ is Lipschitz on $\Gamma$ with
\[
\Lip(\mathbf n)\le L_n := \frac{2C_\psi}{c_0}.
\]

\medskip
\textbf{Step 2: Existence of a reach neighborhood and uniqueness of the metric projection.}
Since $\Gamma$ is a $C^{1,1}$ hypersurface with Lipschitz unit normal, it has
\emph{positive reach}. Consequently, there exists $\rho>0$, depending only on $L_n$,
such that the nearest-point projection onto $\Gamma$ is well defined and unique on
\[
U_\rho := \{x\in\mathbb R^n : \dist(x,\Gamma)<\rho\}.
\]
In particular, for every $x\in U_\rho$ there exists a unique $x^*=\Pi_\Gamma(x)\in\Gamma$
satisfying $|x-x^*|=\dist(x,\Gamma)$.

\medskip
\textbf{Step 3: Embedding the level-set tube into the reach neighborhood.}
We show that for $h$ sufficiently small, $\mathcal T_h\subset U_\rho$.
Fix $x\in\mathcal T_h$ and consider the ODE
\begin{equation}\label{eq:flow_ode_full}
\dot\gamma(t)=-\frac{\nabla\psi(\gamma(t))}{\|\nabla\psi(\gamma(t))\|^2},
\qquad \gamma(0)=x.
\end{equation}
Because $\psi\in C^{1,1}(\mathcal T_h)$ and $\|\nabla\psi\|\ge c_0>0$ on $\mathcal T_h$,
the vector field in \eqnref{eq:flow_ode_full} is locally Lipschitz on $\mathcal T_h$ and
the solution exists as long as $\gamma(t)\in\mathcal T_h$.
Along this trajectory,
\[
\frac{d}{dt}\psi(\gamma(t))
=\nabla\psi(\gamma(t))\cdot\dot\gamma(t)
=-1,
\]
hence $\psi(\gamma(t))=\psi(x)-t$.
Let $T:=|\psi(x)|$. For $t\in[0,T]$ we have
$|\psi(\gamma(t))|=|\psi(x)-t|\le |\psi(x)|<h$, so $\gamma(t)\in\mathcal T_h$ on $[0,T]$ and
\[
\psi(\gamma(T))=\psi(x)-T=0,
\qquad\text{so}\qquad \gamma(T)\in\Gamma.
\]
Therefore,
\[
\dist(x,\Gamma)\le \int_0^{T}\|\dot\gamma(t)\|\,dt
=\int_0^{|\psi(x)|}\frac{1}{\|\nabla\psi(\gamma(t))\|}\,dt
\le \frac{|\psi(x)|}{c_0}
< \frac{h}{c_0}.
\]
Choosing
\[
h_{\mathrm{crit}}:=c_0\,\rho,
\]
it follows that whenever $h<h_{\mathrm{crit}}$ we have $\dist(x,\Gamma)<\rho$, i.e.\ $x\in U_\rho$.
Since $x\in\mathcal T_h$ was arbitrary, this proves $\mathcal T_h\subset U_\rho$.

\medskip
\textbf{Step 4: Normal-coordinate representation and uniqueness.}
Fix $h<h_{\mathrm{crit}}$ and let $x\in\mathcal T_h$.
Then $x\in U_\rho$, so the unique nearest-point projection $x^*=\Pi_\Gamma(x)\in\Gamma$ exists.
For a $C^{1,1}$ hypersurface, the first-order optimality condition for the distance minimization
implies
\[
x-x^*\perp T_{x^*}\Gamma,
\]
hence $x-x^*$ is parallel to $\mathbf n(x^*)$.
Define the signed distance
\[
d := (x-x^*)\cdot \mathbf n(x^*).
\]
Then $x-x^*=d\,\mathbf n(x^*)$, and therefore
\[
x = x^* + d\,\mathbf n(x^*),
\]
which yields \eqnref{eq:projection_relation}. Uniqueness of $x^*$ implies uniqueness of $d$.

\medskip
This completes the proof.
\end{proof}

\section{Proof of Hausdorff Bound}
In this section, we provide a complete proof of \lemref{lem:hausdorff}.
\label{app:hausdorff}

\begin{proof}
Let $\Gamma=\{\phi=0\}$ and $\Gamma^p=\{\psi=0\}$, and assume:
(i) $\phi,\psi\in C^{1,1}(\mathcal T_h)$ for a common tubular neighborhood $\mathcal T_h$;
(ii) the nondegeneracy and curvature bounds
\[
\|\nabla\phi\|\ge c_0,\quad \|\nabla^2\phi\|_{\op}\le C_\phi,
\qquad
\|\nabla\psi\|\ge \tilde c_0,\quad \|\nabla^2\psi\|_{\op}\le \tilde C_\psi
\quad \text{on }\mathcal T_h;
\]
(iii) the uniform field error $\|\psi-\phi\|_{L^\infty(\mathcal T_h)}\le\varepsilon_\infty$.

We prove bounds on the two directed distances and then take the maximum.

\textbf{Step 1: From $\Gamma$ to $\Gamma^p$}
Fix $x\in\Gamma$ so $\phi(x)=0$. Then
\begin{equation}\label{eq:psi_on_Gamma_app}
|\psi(x)|=|\psi(x)-\phi(x)|\le\varepsilon_\infty.
\end{equation}
Define
\[
n_\psi(x):=\frac{\nabla\psi(x)}{\|\nabla\psi(x)\|},
\qquad
\sigma:=\operatorname{sign}(\psi(x))\in\{-1,0,1\}.
\]
If $\sigma=0$ then $x\in\Gamma^p$ and $\dist(x,\Gamma^p)=0$, so assume $\sigma=\pm1$.
Consider the ray
\[
r(s)=x-\sigma s\,n_\psi(x),
\qquad
g(s):=\sigma\,\psi(r(s)).
\]
Then $g(0)=|\psi(x)|>0$.

By the chain rule,
\[
g'(s)
=-\,\nabla\psi(r(s))\cdot n_\psi(x).
\]
Using add--subtract and Cauchy--Schwarz,
\[
\nabla\psi(r(s))\cdot n_\psi(x)
=\nabla\psi(x)\cdot n_\psi(x)
+(\nabla\psi(r(s))-\nabla\psi(x))\cdot n_\psi(x)
\ge \|\nabla\psi(x)\|-\|\nabla\psi(r(s))-\nabla\psi(x)\|.
\]
Since $\nabla\psi$ is Lipschitz with constant $\tilde C_\psi$ on $\mathcal T_h$,
\[
\|\nabla\psi(r(s))-\nabla\psi(x)\|
\le \tilde C_\psi\|r(s)-x\|
= \tilde C_\psi s,
\quad \text{as long as } r(s)\in\mathcal T_h.
\]
Hence,
\[
g'(s)\le -\|\nabla\psi(x)\|+\tilde C_\psi s,
\qquad \text{while } r(s)\in\mathcal T_h.
\]
Integrating from $0$ to $s$ yields
\begin{equation}\label{eq:g_majorant_app}
g(s)\le |\psi(x)|-\|\nabla\psi(x)\|\,s+\tfrac12\tilde C_\psi s^2.
\end{equation}

Let $s_{-x}$ denote the smaller nonnegative root of the quadratic on the right:
\[
s_{-x}
=\frac{\|\nabla\psi(x)\|-\sqrt{\|\nabla\psi(x)\|^2-2\tilde C_\psi|\psi(x)|}}{\tilde C_\psi},
\]
which is well-defined provided
\[
\|\nabla\psi(x)\|^2\ge 2\tilde C_\psi|\psi(x)|.
\]
A sufficient condition is
\begin{equation}\label{eq:smallness_disc_app}
\varepsilon_\infty \le \frac{\tilde c_0^2}{2\tilde C_\psi}.
\end{equation}

To ensure the segment $\{r(s):0\le s\le s_{-x}\}$ remains in $\mathcal T_h$, it suffices to require $s_{-x}\le h$.
Using the inequality $1-\sqrt{1-t}\le t$ for $t\in[0,1]$, we obtain
\[
s_{-x}\le \frac{2|\psi(x)|}{\|\nabla\psi(x)\|},
\]
so a sufficient condition is
\begin{equation}\label{eq:smallness_tube_app}
\varepsilon_\infty \le \frac{\tilde c_0}{2}\,h.
\end{equation}

Under \eqnref{eq:smallness_disc_app}--\eqnref{eq:smallness_tube_app},
\eqnref{eq:g_majorant_app} implies $g(s_{-x})\le 0$ while $g(0)>0$.
By continuity, there exists $\bar s\in(0,s_{-x}]$ such that $g(\bar s)=0$,
i.e.\ $\psi(r(\bar s))=0$ and $r(\bar s)\in\Gamma^p$.
Therefore,
\[
\dist(x,\Gamma^p)\le \bar s \le s_{-x}.
\]

Using $\|\nabla\psi(x)\|\ge \tilde c_0$ and $|\psi(x)|\le \varepsilon_\infty$ yields
\begin{equation}\label{eq:Gamma_to_Gammap_bound_app}
\sup_{x\in\Gamma}\dist(x,\Gamma^p)
\le \frac{\varepsilon_\infty}{\tilde c_0}
+\frac{\tilde C_\psi}{2\tilde c_0^{\,3}}\varepsilon_\infty^2
+\mathcal O(\varepsilon_\infty^3).
\end{equation}

\paragraph{Step 2: From $\Gamma^p$ to $\Gamma$}
Fix $y\in\Gamma^p$ so $\psi(y)=0$. Then $|\phi(y)|\le \varepsilon_\infty$.
Repeating Step~1 with $\phi$ in place of $\psi$ (and constants $c_0,C_\phi$),
assuming
\[
\varepsilon_\infty \le \frac{c_0^2}{2C_\phi},
\qquad
\varepsilon_\infty \le \frac{c_0}{2}\,h,
\]
we obtain
\begin{equation}\label{eq:Gammap_to_Gamma_bound_app}
\sup_{y\in\Gamma^p}\dist(y,\Gamma)
\le \frac{\varepsilon_\infty}{c_0}
+\frac{C_\phi}{2c_0^{\,3}}\varepsilon_\infty^2
+\mathcal O(\varepsilon_\infty^3).
\end{equation}

\textbf{Step 3: Hausdorff distance}
By definition,
\[
d_H(\Gamma,\Gamma^p)
=\max\Big\{
\sup_{x\in\Gamma}\dist(x,\Gamma^p),\;
\sup_{y\in\Gamma^p}\dist(y,\Gamma)
\Big\}.
\]
Combining
\eqnref{eq:Gamma_to_Gammap_bound_app}
and
\eqnref{eq:Gammap_to_Gamma_bound_app}
yields
\[
d_H(\Gamma,\Gamma^p)
\le \frac{\varepsilon_\infty}{\min(c_0,\tilde c_0)}
+\mathcal O(\varepsilon_\infty^2),
\]
where the $\mathcal O(\varepsilon_\infty^2)$ constant depends only on
$(c_0,C_\phi)$ and $(\tilde c_0,\tilde C_\psi)$.
\end{proof}

\end{document}